\newcommand{\ol}[1]{\overline{#1}}\newcommand{\Rt}{\overline{R}}");
\newtheorem{theorem}{Theorem}
\newtheorem{corollary}[theorem]{Corollary}
\newtheorem{definition}[theorem]{Definition}
\newtheorem{proposition}[theorem]{Proposition}
\newtheorem{remark}[theorem]{Remark}
\newtheorem{example}[theorem]{Example}
 \newcommand{\eps}{\varepsilon}
 \newcommand{\ntt}{\operatorname{ntt}}
\renewcommand{\epsilon}{\varepsilon}
\newcommand{\E}{\mathbb{E}}
\renewcommand{\P}{\mathbb{P}}
\newcommand{\N}{\mathbb{N}}
\newcommand{\Q}{\mathbb{Q}}
\newcommand{\R}{\mathbb{R}}
\newcommand{\W}{\mathbb{W}}
\newcommand{\ol}{\overline}
\newcommand{\NN}{\mathbb N}
\newcommand{\RR}{\mathbb R}
\newcommand{\Fcal}{\mathcal F}
\newcommand{\Gcal}{\mathcal G}
\newcommand{\QQ}{\mathbb Q}
\newcommand{\Acal}{\mathcal  A}
\newcommand{\Bcal}{\mathcal  B}
\newcommand{\Ccal}{\mathcal  C}
\newcommand{\Hcal}{\mathcal  H}
\newcommand{\Mcal}{\mathcal M}
\newcommand{\Scal}{\mathcal S}
\newcommand{\Pcal}{\mathcal P}
\newcommand{\Rcal}{\mathcal R}
\newcommand{\Xcal}{\mathcal X}
\newcommand{\lc}{[\![}
\newcommand{\rc}{]\!]}
\newcommand{\1}{\boldsymbol{1}}
\newcommand{\ie}{i.e.}
\newcommand{\eg}{e.g.}
\newcommand{\Omqv}{\Omega^{\mathsf{qv}}}
\newcommand{\wlg}{without loss of generality}
\newcommand{\wrt}{with respect to}
\newcommand{\usc}{upper semi-continuous}
\newcommand{\lsc}{lower semi-continuous}
\numberwithin{equation}{section}
\numberwithin{theorem}{section}
\newcommand{\di}{\mathrm{d}}
\newcommand{\SG}{\mathsf{SG}}
\newcommand{\RST}{\mathsf{RST}}
\begin{document}

\title[Model-independent pricing with insider information]{Model-independent pricing with insider information: a Skorokhod embedding approach}
\author{Beatrice Acciaio}
\address{Department of Statistics, The London School of Economics and Political Science, U.K.}
\email{b.acciaio@lse.ac.uk}
\author{Alexander M.~G.~Cox}
\address{Department of Mathematical Sciences, University of Bath, U.K.}
\email{a.m.g.cox@bath.ac.uk}
\author{Martin Huesmann}
\address{Institute for Mathematical Stochastics, Universit\"at M\"unster, Germany}
\email{martin.huesmann@uni-muenster.de}

\thanks{The authors are grateful for the excellent hospitality of the Hausdorff Research Institute for Mathematics (HIM), where the work was initiated. M.H. gratefully acknowledges  support of the  CRC 1060 and of the FWF-grant Y00782. During parts of this project M.H. has been funded by the Vienna Science and Technology Fund (WWTF) through project VRG17-005. In the final phase of the project M.H. has been funded by the Deutsche Forschungsgemeinschaft (DFG, German Research Foundation) under Germany's Excellence Strategy EXC 2044 –390685587, Mathematics Münster: Dynamics–Geometry–Structure.}  

\date{\today}

\begin{abstract}
In this paper we consider the pricing and hedging of financial derivatives in a model-independent setting, for a trader with additional information, or beliefs, on the evolution of asset prices. In particular, we suppose that the trader wants to act in a way which is independent of any modelling assumptions, but that she observes market information in the form of the prices of vanilla call options on the asset. We also assume that both the payoff of the derivative, and the insider's information or beliefs, which take the form of a set of impossible paths, are time-invariant.
In this way we accommodate drawdown constraints, as well as information/beliefs on quadratic variation or on the levels hit by asset prices.
Our setup allows us to adapt recent work of \citet{BCH17} to prove duality results and a monotonicity principle. This enables us to determine geometric properties of the optimal models.  Moreover, for specific types of information, we provide simple conditions for the existence of consistent models for the informed agent. Finally, we provide an example where our framework allows us to compute the impact of the information on the agent's pricing bounds.

\end{abstract}
\maketitle

\section{Introduction}

It has long been recognised that information plays an extremely important role in the study of modern financial markets. This is most markedly true when two parties trading the same asset have access to different information sources, and then one can ask how the `insider', who possesses additional information, should modify her behaviour to exploit her privileged position. In this paper, we aim to consider problems where the insider has a strong belief in some quantitative, or qualitative, fact about the future evolution of some asset, but is otherwise agnostic about other statistical properties determining the evolution of the asset.

A fundamental, motivating example will be the following: imagine an agent believes that the CEO of a company will act in such a way as to ensure that the share price of the company will not drop below a certain level which depends on the historical maximum of the share price, for example, because the manager is incentivised by stock options which payout provided this drawdown criteria is not breached. Then the agent may want to build this information into her valuations of e.g. derivatives written on the asset. The aim in this paper is to consider problems of this form in a model-independent framework. We claim that this is a natural framework for these problems, since the insider's information already rules out many `standard' models which would not usually satisfy such a constraint, and it is not immediately clear how the agent should choose a model which includes this information.

Problems concerning insider information have a rich literature: the first work in the mathematical finance literature is \citet{PK96}, while important subsequent work includes \cite{BO05,AIS98,GP98,C05}, and this topic is still a very active area of research. Along with different information sets, agents may have different beliefs on the evolution of asset prices. This again will result in different market behaviours.

In the past few years, robust approaches to finance, where no underlying probability measure is assumed \emph{a priori}, have become very popular. Only very recently, additional information and beliefs have been considered in a robust framework. In both \citet{AL15} and \citet{AHO16}, this has been modelled by an enlargement of filtration.  
Closer to the approach of the current work, are the papers by \citet{CHO14}, \citet{HO15}, \citet{BaKuNe17}, and \citet{BZZ18}, which model beliefs in a robust setting by excluding some paths from the possible evolution of the asset's price process (see Section~\ref{sect.lit}).

The goal of this paper is to consider the pricing and hedging problems for traders with different information and beliefs in a continuous-time, robust setting, where call prices at a fixed maturity $T$ are observed. Our analysis relies on two key assumptions. First, we only consider derivatives which are time-invariant, that is, with payoffs which are independent of the clock under which the underlying is running. These include, for example, lookback options, barrier options, corridor options, and variance options. Secondly, we assume that beliefs and insider's additional information are time-invariant and such that they allow the insider to assume that a certain set of paths is impossible. This means specifying the set of feasible paths on which (super-)hedging arguments are required to work. Examples of beliefs we can deal with include those on quadratic variation and those on asset prices hitting (or not hitting) given barriers. As for the time-invariant information, the main example we have in mind is that of drawdown constraints, e.g. imposed by company policy, on the price process never falling below a fixed fraction of its maximum-to-date or never falling below a certain threshold once it has reached a certain level.

The assumption of time invariance allows us to translate the robust pricing problem into a constrained Skorokhod embedding problem (SEP), emulating the approach to robust finance initiated in \citet{Ho98} (see also \cite{Ho11}). In this way,  in the first part of the paper, Sections \ref{sec:informed pricing} and \ref{sec:theo results}, we develop a theoretical framework for our approach.  We are able to extend to the current framework the analysis developed in \citet{BCH17} for the unconstrained problem. Indeed, a simple application of the results of \cite{BCH17} leads to a superhedging and duality result for the insider/the constrained SEP (Theorem~\ref{thm:superhedge}), and to a monotonicity principle which gives a necessary condition on the optimising probability measures for the insider/the constrained SEP (Theorem~\ref{thm:mp}).  Leveraging on our duality result, we are able to provide simple necessary and sufficient conditions to exclude arbitrage for the insider in terms of solutions to the constrained SEP (Proposition~\ref{prop:NA}). On the other hand, the monotonicity principle, that takes the form of geometric conditions on the support of the optimisers, often leads to barrier type solutions. Experience in the case without information suggests that, once the geometric structure of the support of the optimisers is understood, it is much simpler to e.g. develop numerical methods to compute the optimisers for  specific examples. The main motivation for considering the problem under this assumption of time-invariance is that, as a consequence, we are able to prove the monotonicity criteria, which, as we demonstrate, in many natural examples allows us to reformulate the optimisation problems in terms of simpler, geometric criteria. This additional insight opens up a wider range of mathematical and numerical tools which may be applied to both increase understanding of the problems, and to more accurately solve the problems.

In the second part, Section~\ref{sec:examples}, we illustrate that our setup allows to further investigate several concrete and financially meaningful situations gaining new insights in the insider's behaviour in explicit situations.
More precisely, we restrict ourselves to specific sets of feasible paths  
(cf.\ \eqref{ex:l}). This class is quite broad, as it includes information and beliefs mentioned above on whether prices hit certain barriers, on whether the quadratic variation reaches certain levels, and on drawdown constraints. Here we are interested in three interrelated questions:
\begin{enumerate}
 \item When does there exist arbitrage for the insider?
 \item What are the worst case models for the insider?
 \item Can we calculate the value of the insider's information in specific situations?
\end{enumerate}
We address the question of arbitrage in Theorem \ref{thm:RSTExists} for the specific information encoded by \eqref{ex:l}. Specialising to concrete examples, we show in Theorems \ref{thm:root} and \ref{thm:exay} that the question of arbitrage can be reduced to simple ordering properties of particular functions. To the best of our knowledge, the present work is the first one to address the issue of arbitrage in a robust setting with additional information/beliefs.

Concerning the characterisation of worst case models we exemplify the power of the monotonicity principle in Theorem \ref{thm:constRoot} in a concrete setup. We consider the example of variance options with drawdown constraints, and show that we are able to 
\begin{enumerate}
\item Determine when there exists an arbitrage for the insider;
\item Characterise the class of extremal models;
\item Compute numerically the value of the insider's information.
\end{enumerate}
Specifically, in Section~\ref{sec:numerics} we give a numerical example to show the impact of increasing information on the insider's extremal model. Thereby, we can nicely illustrate the impact of increasing information; see Figure~\ref{fig:price}.

\subsection{Literature}\label{sect.lit}

In the robust approach to mathematical finance, the usual setting consists in having some assets available for dynamic trading, and some claims which are available at time zero for static, i.e.\ buy-and-hold, trading. The information at the disposal of the agent is the price of assets and claims at time zero, and the evolutions of the  price of assets in time. In this framework, most of the literature so far has been devoted to showing pricing-hedging duality results, that is, that the minimal cost to super-hedge pathwise a given derivative, equals its maximal price over calibrated martingale measures; see e.g.~\cite{BHP13,ABPS13,Nu14,BN15,BZ14,FH14} in discrete time, and \cite{GHT14,DS14a,DS14b,BCH17,BBKN14,BNT15,HO15,GTT15,BCHPP15} in continuous time, among a rapidly growing literature.

The current literature on the insider problem in a robust setup is still in its infancy.  
In \cite{AL15} and \cite{AHO16} the informed agent has a richer information flow, which results in having more choices for trading strategies, and hence in cheaper robust (super-hedging) prices.  In \cite{AL15} the authors study the models under which the market is complete in a semi-static sense, and through these models they compare the robust prices of the agents with and without additional information. In \cite{AHO16}, pricing-hedging duality results are  given when the additional information is disclosed either at time zero or at a given future instant in time, and it is given by specific random variables. 

Mathematically, our approach is closer to  \cite{CHO14}, \cite{HO15}, \cite{BaKuNe17}, \cite{BZZ18}. Those papers do not consider insider information, but model beliefs or prediction sets in a robust setting by specifying the set of feasible paths for the possible evolution of the asset's price process. The main aim in these papers is the study of the pricing-hedging duality.
In \cite{CHO14} the authors work in discrete-time and study duality, showing that in some cases a gap may appear, i.e. duality may fail. In \cite{HO15} a continuous-time setup is considered, and sufficient criteria are given so that duality holds in an asymptotic sense. In \cite{BaKuNe17} the authors consider a continuous-time setting and prediction sets in the space of continuous paths, and provide several duality results. Finally, in \cite{BZZ18} the authors obtain duality and monotonicity results for a broad class of constrained optimal transport problems, under some conditions on the space of paths and on the set of admissible transports. 

In the present paper we work in continuous time, so our duality results are comparable to those in \cite{HO15},\cite{BaKuNe17}, \cite{BZZ18}. However, \cite{HO15} considers derivatives with uniformly continuous payoff, so that the framework is orthogonal to the present one, where payoffs are assumed to be invariant to time change.  In \cite{BaKuNe17} these restrictions are substantially weakened, but without the inclusion of other traded options. In \cite{BZZ18} analyticity conditions are required on the set of admissible paths, rather than the time-invariance assumed here.
Also in a similar spirit to our results  in Section~\ref{sect:var} is the PhD thesis of \citet{Spoida}, which considers the situation where only finitely many options are available for static trading and, for specific kinds of derivatives, describes the optimal solutions for agents having beliefs on realised variance.

 To the best of our knowledge, the {\it constrained} Skorokhod embedding problem (conSEP) has not previously been systematically considered in the literature.
The only papers which we are aware of, that consider related problems, are \citet{AnSt11} and \citet{AnHoSt15}, that provide conditions under which a distribution may be embedded in Brownian motion or a diffusion in bounded time, which have some connections to the results in Section~\ref{sec:examples}. Also, the setting in \cite{BZZ18} covers for example the case of robust pricing in case of bounded quadratic variation, which leads to establishing conditions for the existence of Skorokhod embeddings in bounded time.

\subsection{Outline of the article}

In the present paper we will work in a continuous-time setup, under the assumption that the asset's price process $S$ evolves continuously, and all call options for a given maturity $T$ are traded at time zero in the market. We perform a time-change to formulate the pricing problem as a constrained optimal stopping problem in Wiener space and resort to Skorokhod embedding techniques. For this approach to be effective, we need to restrict our attention to the case where both the derivatives' payoff function and the feasibility of paths are invariant to time-changes in an appropriate sense.
The key concepts and definitions for this setup are introduced in Section~\ref{sec:informed pricing}. In Section~\ref{sec:theo results}, we show that the pricing-hedging duality and the monotonicity principle of \cite{BCH17}
can be extended in a natural way to our setting, thus allowing us to give a geometric characterisation of the support of the optimisers in the primal problem.

Then, in Section~\ref{sec:examples}, we consider specific examples of feasible sets where we may apply the results of the previous sections to determine specific consequences of certain types of information possessed by the insider. We first consider the implications of information which restricts the observed paths to occur either before or after (or both) some path-dependent event. In this case, we are able to give sufficient conditions for the existence of arbitrage for the insider. Next, we consider the case where feasibility corresponds to paths which do not enter given regions of an appropriate phase-space, and determine necessary and sufficient conditions for the additional information not to introduce arbitrage possibilities.  Finally,  we show how the monotonicity principle can be used to derive characterising properties of the optimisers subject to a given information set. In particular, we consider the problem described at the start of the introduction, where the insider believes a certain drawdown constraint is satisfied, and wishes to understand the impact on variance derivatives. In this case, we are able to describe the properties of the resulting optimiser, and also compute numerically an upper bound on the value of the derivative in both the model without information, and the model with information. These numerical results give us a good indication of the impact of the information on the prices of other derivatives.

\section{Informed robust pricing}\label{sec:informed pricing}
Throughout the paper, for $I \subset \R$, we write $\Ccal(I)$ for the space of continuous functions $\omega:I\to\R$ endowed with the topology of uniform convergence on compacts. When $I \subset [0,\infty)$, we write $\Ccal_x(I)$ for the subset of paths such that $\omega(0)=x$.

We consider a market consisting of a risk-free asset (bond), whose price is normalised to $1$, and a risky asset (stock) which is assumed to have a continuous price evolution, though neither  a reference probability nor the dynamics are specified. The assets are continuously traded on the fixed time-horizon $[0,T]$, $0<T<\infty$. Let the initial price of the stock be $s_0$; in this way we can think of the stock price process $S$ as the canonical process on $\Ccal_{s_0}[0,T]=\Ccal_{s_0}([0,T])$. We assume we observe the prices of call
options with maturity $T$ for all strikes, which corresponds to having the knowledge of the marginal distribution of $S$ at time $T$, say $\mu$, under any pricing measure by the Breeden-Litzenberger formula, \cite{BrLi78}. In particular, $\int x\mu(dx)=s_0$. We assume $\int (x-s_0)^2\,\mu(dx)=:V<\infty$. This condition is introduced in order to simplify the presentation, and can be relaxed (see e.g.\ \cite[Section 7]{BCH17}). Given a derivative with payoff function $F:\Ccal_{s_0}[0,T]\to\RR$ written on $S$, the robust pricing problem is to determine
\begin{equation}\label{pb:p}
\sup \{ \E_\Q[F(S)] : \Q\in\Mcal(\mu) \},
\end{equation}
where $\Mcal(\mu)$ is the set of all martingale measures $\Q$ on $\Ccal_{s_0}[0,T]$ such that $S_T\sim_\Q\mu$ (by martingale measure we mean a measure under which the canonical process is a martingale). 
This leads to the upper price bound for the derivative $F$ related to the worst case scenario of the evolution of the risky asset. Analogously, one can consider the infimum in \eqref{pb:p}, that is, the lower price bound for $F$. Mathematically the maximisation and minimisation problems are very similar, and in this article we concentrate on the former.

In practice, often not only the prices of call options with given maturity are available, but an agent may have other information or beliefs relating to the evolution of the asset price.
Incorporating this information may rule out certain behaviour of the stock price $S$, and hence certain models for $S$, which in turn leads to potentially smaller price bounds. We model this by introducing an \emph{informed agent}, also called the \emph{insider}, possessing some additional information  and beliefs which enables her to only consider a subset $\Acal\subseteq\Ccal_{s_0}[0,T]$ of \emph{feasible paths} for $S$ (precise assumptions on $\Acal$ will be given in \eqref{eq:feasible sets}). All other paths in $\Ccal_{s_0}[0,T]\setminus\Acal$ are deemed negligible due to the additional information held by the insider. Hence the robust pricing problem for the insider is
\begin{equation}\label{eq:insider}
P_\Acal:=\sup \{ \E_\Q[F(S)] : \Q\in\Mcal(\mu),\, \Q(\Acal)=1 \}.
\end{equation}
To give a value to the additional information, we will talk of the \emph{uninformed agent} or \emph{outsider} when considering an agent who does not have any other information than the call prices. Hence, the outsider's pricing problem is the classical robust pricing problem in \eqref{pb:p}, which corresponds to setting $\Acal=\Ccal_{s_0}[0,T]$ in \eqref{eq:insider}.

In the rest of this section, we recall and adapt the setup and results from \cite{BCH17} and \cite{BCHPP15} relying on \cite{Vo15}, which will allow us to formulate and analyze \eqref{eq:insider} as a \emph{constrained Skorokhod embedding problem}. In order to do so, we will first introduce a time-change, in Section~\ref{sect:tt}, that is a different clock under which we want to observe the paths of $S$. Next we will show that the pricing problem \eqref{eq:insider} has an equivalent formulation as an optimal stopping problem for Brownian motion on some probability space (problem \eqref{eq:P*f}), when the derivative and the additional information are invariant with respect to this time-change. Finally, we shall pass from this weak formulation of the problem to an optimisation problem on a single probability space, the Wiener space, which will require more general stopping rules; see problem \eqref{eq:Wiener opt}.

\subsection{Time transformation.}\label{sect:tt} 

The key tool to translate \eqref{eq:insider} into a constrained Skorokhod embedding problem is the Dambis-Dubins-Schwarz Theorem. However, we need to be careful in defining the time change since we want to be able to shift pathwise inequalities from $\Ccal_{s_0}[0,T]$ to the Wiener space and back. Moreover, the time change will be a useful tool to precisely define the options we want to consider as well as the set of feasible paths for the insider.

For $\omega\in\Ccal(\R_+)$ and $n\in\N$, we define the sequence of times
\[\sigma^n_0(\omega):=0,\quad \sigma^n_{k+1}(\omega):=\inf\{t>\sigma^n_k(\omega) : |\omega(t)-\omega(\sigma^n_k)|\geq2^{-n}\},\; k\in\N,\]
and we say that $\omega$ has quadratic variation if the sequence $(V_n(\omega))_{n\in\N}$ of functions
\[
V_n(\omega)(t):=\sum_{k=0}^\infty(\omega(\sigma^n_{k+1}\wedge t)-\omega(\sigma^n_k\wedge t))^2,\qquad t\in\R_+
\]
converges uniformly on compacts to some function in $\Ccal_0(\R_+)$, and the limit function has the same intervals of constancy as $\omega$.
We denote this function by $\langle \omega\rangle$. We write $\Omqv$ for the space of all paths $\omega$ in $\Ccal_{s_0}(\R_+)$ possessing such a quadratic variation and such that either $\langle \omega\rangle$ diverges at infinity or $\langle \omega \rangle$ is bounded and $\omega$ has a well defined limit at infinity. These conditions are necessary in order for the map $\ntt$ given below to be well defined. It is not hard to show that $\Omqv$ is a measurable subset of $\Ccal(\R_+)$.

We define the space of \emph{stopped paths} as
\[
\Scal:=\left\{(f,s) : f\in\Ccal_{s_0}[0,s], s\in\R_+\right\},
\]
and equip it with the distance $d_\Scal$ defined for $s<t$ by
\begin{equation*}
d_\Scal((f,s),(g,t))=\max\Big\{t-s,\sup_{0\leq u\leq s}|f(u)-g(u)|,\sup_{s\leq u\leq t}|g(u)-f(s)|\Big\},
\end{equation*}
which turns $\Scal$ into a Polish space.
The space $\Scal$ is a convenient way of encoding optionality of a process in our pathwise setup,  see e.g.\ \cite[Theorem IV.\ 97]{DeMeA};
note that optionality is equivalent to predictability, since we consider only continuous paths. More precisely, we set
$$ r: \Ccal_{s_0}(\R_+)\times \R_+ \to \Scal, \quad (\omega,t)\mapsto (\omega|_{[0,t]},t),$$
where $\omega|_{[0,t]}$ denotes the restriction of $\omega$ to $[0,t]$. Then a process $X$ with $X_0=s_0$ is optional if and only if there is a Borel function $H:\Scal\to\R$ such that $X=H\circ r$.

We call $\Omega_T^{\mathsf{qv}}$ the set of paths in $\Ccal_{s_0}[0,T]$ which have a continuation in $\Omega^{\mathsf{qv}}$, and for $\omega\in\Omega_T^{\mathsf{qv}}$ we define the following new clock:
\begin{equation}\label{eq:tau_t}
\tau_t(\omega)=\inf\{s\in[0,T] : \langle\omega\rangle_s > t\}\wedge T,\quad t\in\R_+,
\end{equation}
with the usual convention $\inf \emptyset=+\infty$.

We will work with the \emph{normalising time transformation} introduced by Vovk~\cite{Vo12}, which is defined by 
$\ntt_T:\Omqv_T\to\Scal$ given by
\begin{equation*}
\ntt_T(\omega)=((\omega_{\tau_t})_{t \le \langle\omega\rangle_T},\langle\omega\rangle_T).
\end{equation*}
That is, $\ntt_T(\omega)$ is a version of the
path $\omega$ run at a speed such that, for every $t$, its pathwise quadratic variation at time $t$ is exactly $t$. It will also be notationally useful at times to `forget' the time component, and consider the function $\ntt(\omega)$, which is equal to $(\omega_{\tau_t})_{t \le \langle\omega\rangle_T} \in \Ccal_{s_0}[0, \langle\omega\rangle_T]$. Of course, the two quantities are mathematically equivalent. The normalising time transformation will be the tool that will allow us to define the class of time-invariant derivatives and the kind of time-invariant additional information which are suitable in order to develop the SEP approach to  robust pricing with insider information.

\begin{remark}\label{rem:omqv}
 Note that $\QQ(\Omqv_T)=1$ for each $\QQ\in\Mcal(\mu)$ (see \citet{Ka95,Vo12}). For this reason, when studying the pricing problem \eqref{eq:insider}, we only consider paths in this set. 
\end{remark}

In this article, we consider payoff functions $F:\Ccal_{s_0}[0,T]\to\R$ which on $\Omega^{\mathsf{qv}}_T$ satisfy
\begin{equation}\label{eq:F}
F=\gamma\circ\ntt_T,
\end{equation}
for some Borel measurable $\gamma:\Scal\to\R$.
This means that the payoff function $F$ is identical for all paths which are time-transformations of each other, that is, which coincide after normalising the speed at which they run. 

A key additional component in our model will be the information which is held by the insider, and which is not known to the market. We will model this by assuming that the insider knows a set of feasible paths $\Acal\subseteq\Ccal_{s_0}[0,T]$. Thanks to Remark~\ref{rem:omqv}, we may assume \wlg{} that $\Acal\subseteq\Omqv_T$.
As with the payoff function, we will assume that the set $\Acal$ of feasible paths is time-invariant. More precisely, we will consider sets $\Acal$ given by
\begin{align}\label{eq:feasible sets}
\mathcal A = \ntt_T^{-1}(\Lambda),
\end{align}
for some measurable subset $\Lambda\subseteq \Scal$, so that $1_{\mathcal A}(\omega)=1_\Lambda \circ \ntt_T(\omega)$.
We will call $\Lambda$ the \emph{feasibility set}. In this way, feasibility of a path $\omega\in\Ccal_{s_0}[0,T]$ is shifted to admissibility of  the stopped path $(\ntt(\omega),\langle\omega\rangle_T)$. 
In particular, if a path $\omega \in \Ccal_{s_0}[0,T]$ is feasible, so is any other path which is a time transformation of $\omega$.

\subsection{Informed robust pricing as constrained SEP} The time transformation introduced above enables us to express the robust pricing problem \eqref{eq:insider} as a constrained optimal stopping problem for Brownian motion. 

\begin{proposition}\label{prop:conSEP}
Let $F$ and $\Acal$ satisfy \eqref{eq:F} and \eqref{eq:feasible sets}. The pricing problem for the insider
\eqref{eq:insider} can be formulated as 
\begin{equation}\label{eq:P*f}
P^*_\Lambda:=\sup\left\{\E[\gamma((W_t)_{t\leq\tau},\tau)] : \begin{aligned}
&(\tilde\Omega,(\Gcal_t)_{t\geq 0},\Gcal,\P) \textrm{supporting Brownian motion }W, \\ & W_0=s_0, \tau \textrm{ a } \Gcal\textrm{-stopping time s.t. } W_\tau\sim\mu,\\
& \, (W_{t\wedge\tau})_{t\geq 0}\ \textrm{is u.i.},\, \text{and}\ \E[1_\Lambda((W_t)_{t\leq\tau},\tau)]=1
\end{aligned}
\right\}.
\end{equation}
\end{proposition}

The condition $\E[1_\Lambda((W_t)_{t\leq\tau},\tau)]=1$ means that, when moving along a path of $W$, we can stop only at times such that the stopped path lies in $\Lambda$. This corresponds to the fact that informed agents only need to take into account the paths in the feasibility set, $\Lambda$.
The condition of uniform integrability on $W_{\cdot\wedge\tau}$
is - in the current setup - equivalent to $\tau$ being minimal, cf.\ \cite{Monroe:1972aa}. This means that, for any other stopping time $\tau'$ in the same filtered probability space,
\begin{align}\label{eq:minimal}
\tau'\leq \tau\, \text{ and }\, W_{\tau'}\sim \mu\, \text{ imply }\, \tau'=\tau\, \text{a.s.}
\end{align}

The reformulation in \eqref{eq:P*f} shows how the maturity time $T$, at which we know the marginal distribution $\mu$ of the price process, does not play any role in the pricing problem. This is a consequence of the time-invariance assumption.

\begin{proof}
The proof of this essentially follows from \cite[Section~4]{BCHPP15}:
 Let $\Q\in\Mcal(\mu)$, $(\tau_t)_{t\in\R_+}$ be the time-change defined in \eqref{eq:tau_t}, and $(\Fcal^S_t)_{t\in[0,T]}$ 
the usual augmentation of the filtration generated by $(S_t)_{t\in[0,T]}$. It is easy to verify that
$\langle S\rangle_T$ is a stopping time with
respect to the filtration $(\Fcal^S_{\tau_t\wedge T})_{t\in\R_+}$. Then, the Dambis-Dubins-Schwarz theorem implies that the
process $(X_t)_{t\in\R_+}=(\ntt(S)_{t\wedge\langle S\rangle_T})_{t\in\R_+}$ is a
stopped Brownian motion under $\Q$ in the filtration
$(\Fcal^S_{\tau_t\wedge T})_{t\in\R_+}$. Moreover, it is uniformly integrable
and satisfies $\ntt(S)_{\langle S\rangle_T}\sim\mu$.  \emph{Vice versa}, let $W$ be a Brownian motion on some probability space $(\tilde\Omega,(\Gcal_t)_{t\geq0},\P)$, and
$\tau$ be a stopping time such that $W_{.\wedge\tau}$ is uniformly
integrable with $W_\tau\sim\mu$. Then, for $M=(M_t)_{t\in[0,T]}$
defined by $M_t:=W_{\frac{t}{T-t}\wedge\tau}$, we have 
that
$\P\circ
M^{-1}\in\Mcal(\mu)$. 
The result follows.
\end{proof}

To be able to analyse the optimisation problem \eqref{eq:P*f}, we introduce another optimisation problem living on a single probability space, the Wiener space $(\Ccal_{s_0}(\R_+),\Fcal,\W)$. To this end we consider the set
\[
M=\{\xi\in\Pcal(\Ccal_{s_0}(\R_+)\times\R_+) : \xi(d\omega,dt)=\xi_\omega(dt)\W(d\omega), \xi_\omega\in\Pcal(\R_+)\, \textrm{for $\W$-a.e. $\omega$}\},
\]
where  $\Pcal(\Xcal)$ denotes the set of probability measures on a space $\Xcal$, and $(\xi_\omega)_{\omega\in\Ccal_{s_0}(\R_+)}$ is a  regular disintegration of $\xi$ with respect to the first coordinate $\omega$. We equip $M$ with the  weak topology induced by the continuous bounded functions on $\Ccal_{s_0}(\R_+)\times\R_+$. Each $\xi\in M$ can be uniquely characterised by the cumulative distribution function $A^{\xi}(\omega,t)=\xi_\omega[0,t]$.

\begin{definition}\label{def:RST}
We say that a measure $\xi\in M$ is a \emph{randomised stopping time} if the corresponding increasing process $A^\xi$ is optional, and write $\xi \in \RST$. For an optional process $X:\Ccal_{s_0}(\R_+)\times \R_+\to\R_+$ and $\xi\in\RST$, we define $X_\xi$ as the pushforward of $\xi$ under the mapping $(\omega,t)\mapsto X_t(\omega)$.
We denote by $\RST(\mu)$ the set of all randomised stopping times such that $W_\xi=\mu$ and $\int t~\xi(d\omega, dt)<\infty$. 
\end{definition}

\begin{remark}\label{rem:RST}
It is well known that any randomised stopping time $\xi$ can be identified with a stopping time $\tau_\xi$ on the extended probability space $(\Ccal_{s_0}(\R_+)\times [0,1], \Fcal\otimes \mathcal B,\W\otimes\mathcal L)$, where $\mathcal B$ denotes the Borel $\sigma$-algebra on $[0,1]$, and $\mathcal L$ the Lebesgue measure on $[0,1]$. One way of defining $\tau_\xi$ is via
$$\tau_\xi(\omega,u):=\inf\{t\geq 0: \xi_\omega([0,t])\geq u\}.$$
As a consequence, the optional stopping theorem applies for randomised stopping times. Indeed, any process $X$ on $\Ccal_{s_0}(\R_+)$ can be lifted to a process $\bar X$ on $\Ccal_{s_0}(\R_+)\times [0,1]$ by setting $\bar X_t(\omega,u):= X_t(\omega)$ and then the classical optional stopping theorem applies for, e.g., uniformly integrable martingales.
\end{remark}

Considering the martingale $W_t^2-t$ it follows from classical results on stopping times (e.g.\ \cite[Corollary 3.3]{Ho11}, \cite[Lemma 3.12]{BCH17}, Remark \ref{rem:RST}) that, for $\xi\in\RST$ with $W_\xi=\mu$, the condition $\int t~\xi(d\omega, dt)<\infty$ is equivalent to 
\begin{align}\label{eq:V}
\int t~\xi(d\omega, dt)=\int (x-s_0)^2 ~\mu(dx) = V, 
\end{align}
which is assumed to be finite in our setup.

By \cite[Theorem 3.14]{BCH17}, $\RST(\mu)$ is non-empty and compact \wrt{} the topology induced by the continuous and bounded functions on $\Ccal_{s_0}(\R_+)\times \R_+$. As a direct consequence we get the following result:
\begin{corollary}\label{cor:compact}
Let $\Lambda\subseteq\Scal$ be closed. Then the set of feasible randomised stopping times 
\begin{equation}\label{eq:RSTf}
\mathsf{RST}(\mu; \Lambda):=\left\{\xi\in\mathsf{RST}(\mu) : \int_{\Ccal_{s_0}(\R_+)\times\R_+}
1_\Lambda\circ r(\omega,t)\xi(d\omega,dt)=1\right\}
\end{equation}
is convex and compact \wrt{} the topology induced by the continuous and bounded functions on $\Ccal_{s_0}(\R_+)\times \R_+$.
\end{corollary}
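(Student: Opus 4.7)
The plan is to deduce both properties from the compactness and convexity of $\RST(\mu)$ supplied by \cite[Theorem 3.14]{BCH14}, by showing that $\RST(\mu;\Lambda)$ is cut out by a linear constraint that is, moreover, closed in the relevant topology. The constraint $\xi(r^{-1}(\Lambda)) = 1$ is linear in $\xi$, so convexity of $\RST(\mu;\Lambda)$ is inherited immediately from convexity of $\RST(\mu)$: any convex combination $\alpha \xi_1 + (1-\alpha)\xi_2$ of elements of $\RST(\mu;\Lambda)$ lies in $\RST(\mu)$, and satisfies $(\alpha\xi_1 + (1-\alpha)\xi_2)(r^{-1}(\Lambda)) = \alpha + (1-\alpha) = 1$.

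For compactness, since $\RST(\mu)$ is already compact, it suffices to prove that $\RST(\mu;\Lambda)$ is a closed subset. The first step is to verify that the map
\[
r : \Ccal_{s_0}(\R_+)\times\R_+ \to \Scal,\qquad (\omega,t)\mapsto (\omega|_{[0,t]},t),
\]
is continuous with respect to uniform convergence on compacts in the first factor and the metric $d_\Scal$ in the target. Given $(\omega_n,t_n)\to(\omega,t)$, one bounds $d_\Scal(r(\omega,t),r(\omega_n,t_n))$ by a maximum of three terms: $|t_n-t|$, a supremum controlled directly by uniform convergence on a compact time interval, and a term of the form $\sup_{u\in[t\wedge t_n,t\vee t_n]}|\omega_n(u)-\omega(t\wedge t_n)|$, which is handled by the triangle inequality combined with the uniform continuity of $\omega$ on compacts. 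Consequently $r$ is continuous, and $r^{-1}(\Lambda)$ is a closed subset of $\Ccal_{s_0}(\R_+)\times\R_+$ whenever $\Lambda$ is closed in $\Scal$.

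The second step is a Portmanteau argument. Note that every $\xi\in\RST(\mu)$ is a probability measure: indeed $B_\xi=\mu$ is a probability measure on $\R$, so the total mass of $\xi$ equals $\mu(\R)=1$. Now suppose $(\xi_n)\subset \RST(\mu;\Lambda)$ converges to some $\xi\in\RST(\mu)$ in the topology induced by $C_b(\Ccal_{s_0}(\R_+)\times\R_+)$. Since $r^{-1}(\Lambda)$ is closed and the total masses $\xi_n(\Ccal_{s_0}(\R_+)\times\R_+)=1$ converge to $\xi(\Ccal_{s_0}(\R_+)\times\R_+)=1$, the Portmanteau theorem yields
\[
1 = \limsup_{n\to\infty} \xi_n(r^{-1}(\Lambda)) \leq \xi(r^{-1}(\Lambda)) \leq 1,
\]
so $\xi(r^{-1}(\Lambda))=1$, i.e.\ $\xi\in\RST(\mu;\Lambda)$. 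This closes the subset inside the compact set $\RST(\mu)$ and gives the desired compactness. The only step requiring any real care is the verification that $r$ is continuous; everything else is a formal consequence of \cite[Theorem 3.14]{BCH14} and standard Portmanteau-type facts.
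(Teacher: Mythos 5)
Your proof is correct and follows the same route as the paper, which simply observes that since $\Lambda$ is closed the constraint $\int 1_\Lambda\circ r\,d\xi=1$ is a closed (and clearly linear) condition on the compact convex set $\RST(\mu)$ from \cite[Theorem 3.14]{BCH14}. You have merely filled in the details the paper leaves implicit — the continuity of $r$, hence closedness of $r^{-1}(\Lambda)$, and the Portmanteau argument — all of which check out.
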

We highlight here the important feature that $\RST(\mu;\Lambda)$ might be empty, which can be understood as a robust arbitrage opportunity, see Proposition \ref{prop:NA} and Section \ref{sec:examples}.
\begin{proof}
 Since $\Lambda$ is assumed to be closed, the function $1_\Lambda\circ r$ is u.s.c. Hence $\int_{\Ccal_{s_0}(\R_+)\times\R_+}
1_\Lambda\circ r(\omega,t)\xi(d\omega,dt)=1$ is a closed condition by the Portmanteau theorem. 
\end{proof}

Another important property of the feasible randomised stopping times is that they are precisely the joint distributions on $\Ccal_{s_0}(\R_+)\times\R_+$ of pairs $(W,\tau)$ satisfying the constraints in \eqref{eq:P*f}. This is a straightforward extension of \cite[Lemma 3.11]{BCH17}. Putting everything together we have derived a formulation of our optimisation problem \eqref{eq:insider} resp.\ \eqref{eq:P*f} on the Wiener space as a \textit{constrained Skorokhod embedding problem}:
\begin{proposition}\label{prop conSEP}
In the setting described above,
\begin{equation*}\label{eq:Wiener opt}
P^*_\Lambda=\sup\left\{\int_{\Ccal_{s_0}(\R_+)\times\R_+}\gamma\circ r(\omega,t)\xi(d\omega,dt) : \xi\in\mathsf{RST}(\mu;\Lambda)\right\}. \tag{$\mathsf{conSEP}$}
\end{equation*}
\end{proposition}

We will say that \eqref{eq:Wiener opt} is well posed if $\textstyle{\int_{\Ccal_{s_0}(\R_+)\times\R_+}\gamma\circ r~d\xi}$ exists with values in $[-\infty,\infty)$ for all $\xi \in\RST(\mu;\Lambda)$, and is finite for one such $\xi$. In particular, \eqref{eq:Wiener opt} is not well posed if $\RST(\mu;\Lambda)=\emptyset$ which has a pleasing financial interpretation (cf.\ Proposition~\ref{prop:NA}).
The (unconstrained) Skorokhod embedding problem corresponds to the case $\Lambda=\Scal$, when all paths are feasible, hence the above supremum is taken over $\mathsf{RST}(\mu)$.

From an analytical point of view, the formulation \eqref{eq:Wiener opt} is extremely useful since we are now dealing with a linear optimisation problem over a convex and compact set on a single probability space. A direct consequence is the following result:

\begin{theorem}\label{thm:max exists}
Let $\gamma:\Scal\to\R$ be \usc{} and bounded from above in the sense that, for some constants $a, b, c\in\R_+$,
 \begin{align}\label{eq:bdd above}
\gamma((\omega(s))_{s\leq t},t)\leq a+bt+c\sup_{s\leq t}\omega(s)^2,\qquad (\omega,s)\in\Ccal_{s_0}(\R_+)\times\R_+.
 \end{align}
Assume that {$\Lambda\subseteq\Scal$ is closed} and that $\RST(\mu;\Lambda)$ is non-empty. Then the optimisation problem \eqref{eq:Wiener opt} admits a maximiser.
\end{theorem}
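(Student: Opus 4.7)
The plan is to apply the Weierstrass principle: an upper-semicontinuous functional on a non-empty compact set attains its supremum. Since $\Lambda$ is closed and $\RST(\mu;\Lambda)$ is assumed non-empty, Corollary~\ref{cor:compact} already delivers that $\RST(\mu;\Lambda)$ is convex and $\tau$-compact, where $\tau$ denotes the topology induced by the bounded continuous functions on $\Ccal_{s_0}(\R_+)\times\R_+$. It remains to verify that $\Psi(\xi):=\int \gamma\circ r\,d\xi$ is $\tau$-upper semicontinuous on $\RST(\mu;\Lambda)$.

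Continuity of $r$ and upper semicontinuity of $\gamma$ imply that $\gamma\circ r$ is usc on $\Ccal_{s_0}(\R_+)\times\R_+$. Letting $\phi(\omega,t):=a+bt+c\sup_{s\leq t}\omega(s)^2$ be the continuous envelope furnished by \eqref{eq:bdd above}, I would proceed by the decomposition
\begin{equation*}
\Psi(\xi) \,=\, \int (\gamma\circ r - \phi)\,d\xi \,+\, \int \phi\,d\xi .
\end{equation*}
The integrand of the first summand is non-positive and usc. Writing it as a decreasing limit of its bounded truncations $(\gamma\circ r - \phi)\vee(-N)$, the Portmanteau theorem applied to each (bounded) truncation gives $\tau$-usc of the associated functional; monotone convergence in $N$ then presents $\xi\mapsto\int(\gamma\circ r-\phi)\,d\xi$ as the pointwise infimum of $\tau$-usc maps, hence itself $\tau$-usc.

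For the second summand, \eqref{eq:V} gives $\int t\,d\xi=V$ for every $\xi\in\RST(\mu;\Lambda)$, so only $\xi\mapsto\int\sup_{s\leq t}\omega(s)^2\,d\xi$ is non-trivial. Doob's $L^2$-inequality applied to the uniformly integrable martingale $\omega_{\cdot\wedge t}$ (with terminal law $\mu$ under $\xi$) yields the uniform bound $\int \sup_{s\leq t}\omega(s)^2\,d\xi \leq 4(s_0^2+V)$ across $\RST(\mu;\Lambda)$. To upgrade this to $\tau$-continuity of $\xi\mapsto\int\phi\,d\xi$, I would run a uniform integrability argument: $\omega(t)\sim\mu$ under every $\xi\in\RST(\mu)$, so $\omega(t)^2$ has a fixed (hence UI) distribution, and the excess $\sup_{s\leq t}\omega(s)^2-\omega(t)^2$ is controlled uniformly by Doob, allowing one to approximate $\phi$ by the bounded continuous truncations $\phi\wedge K$ uniformly in $\xi$ and to pass the resulting convergence through the weak convergence $\xi_n\to\xi$. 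Combining $\tau$-usc of the first summand with $\tau$-continuity of the second yields $\tau$-usc of $\Psi$, and $\tau$-compactness of $\RST(\mu;\Lambda)$ produces a maximiser as the limit point of any maximising sequence.

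The main obstacle is this last continuity step: weak convergence against the continuous but unbounded function $\phi$ only delivers lower semicontinuity of $\xi\mapsto\int\phi\,d\xi$ for free, and extracting the reverse inequality requires a careful uniform integrability analysis along converging sequences in $\RST(\mu;\Lambda)$, exploiting that the terminal marginal $\mu$ is fixed. The closed constraint $\xi(\Lambda)=1$, being $\tau$-preserving, adds no difficulty beyond the unconstrained analysis of \cite[Section 3]{BCH14}, which should supply the UI ingredients needed here.
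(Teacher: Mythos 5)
Your overall strategy (Weierstrass on the compact set supplied by Corollary~\ref{cor:compact}, plus upper semicontinuity of $\xi\mapsto\int\gamma\circ r\,d\xi$) matches the paper's, and your treatment of the non-positive part $\gamma\circ r-\phi$ by bounded truncation, Portmanteau and monotone convergence is correct. Where you genuinely diverge is in handling the unbounded envelope $\phi(\omega,t)=a+bt+c\sup_{s\le t}\omega(s)^2$: you propose to prove weak continuity of $\xi\mapsto\int\phi\,d\xi$ on $\RST(\mu;\Lambda)$ via a uniform integrability argument. You correctly flag this as the hard step, but you leave it as a sketch; as written, the reverse inequality $\limsup_n\int\phi\,d\xi_n\le\int\phi\,d\xi$ is asserted rather than proved, and a uniform $L^1$ bound from Doob's inequality is not by itself uniform integrability. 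The step can be completed (UI of $t$ over $\RST(\mu)$ follows from optional stopping of $B_t^2-t$ together with $\xi(\{t>K\})\le V/K$ and the fixed terminal law $\mu$; UI of $\sup_{s\le t}\omega(s)^2$ follows from a refined Doob estimate by the same token), but this is a nontrivial piece of work. The paper sidesteps it entirely with a cleaner reduction: by the \emph{pathwise} Doob inequality of \cite{AB+13}, $\sup_{s\le t}\omega(s)^2\le M_t+4\omega(t)^2$ for an explicit martingale $M$ with $M_0=0$. Subtracting $a'+b't+c'(M_t+\omega(t)^2)$ from $\gamma\circ r$ leaves a function that is still upper semicontinuous and now bounded above, while the subtracted term has the \emph{same} integral for every $\xi\in\RST(\mu)$ (namely via \eqref{eq:V}, the fixed second moment of $\mu$, and $\int M_t\,d\xi=0$), so no continuity of the unbounded part is needed at all; Portmanteau and compactness then finish the proof in one line. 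Your route is viable but buys nothing extra and costs the delicate UI analysis, which you must actually carry out if you take this path, since for an unbounded continuous $\phi\ge0$ weak convergence only yields lower semicontinuity of $\xi\mapsto\int\phi\,d\xi$ for free.
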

\begin{proof}
We claim that \wlg{} we can assume that $\gamma$ is bounded from above. Indeed, by the pathwise version of Doob's inequality (see \cite{AB+13}),
$$ \sup_{s\leq t} \omega(s)^2\leq M_t + 4\omega(t)^2$$
for some martingale $M_t$ starting in zero. Hence condition \eqref{eq:bdd above} implies that 
$$\tilde\gamma\circ r(\omega,t):=\gamma\circ r(\omega,t) - a' - b't -c' (M_t + \omega(t)^2)$$
 is bounded from above and the term $\int a' - b't -c' (M_t + \omega(t)^2)~ d\xi$ is independent of $\xi\in \RST(\mu)$ by \eqref{eq:V} and the assumed second moment of $\mu$. Therefore, we can assume $\gamma$ to be bounded from above.

Finally, since $\RST(\mu;\Lambda)$ is compact  and $r$ continuous, and by the Portmanteau Theorem the map $\xi\mapsto \int \gamma\circ r~d\xi$ is \usc{}, we deduce the result.
\end{proof}

\section{Super-replication and monotonicity principle}\label{sec:theo results}

In this section, we show that a straightforward application of the results in \cite{BCH17} leads to duality or superhedging results, and to a geometric characterisation of primal optimisers, that is, to the monotonicity principle for constrained Skorokhod embedding.

\subsection{Duality}
In this section, we first show a duality result for the problem $P_\Lambda^*$ defined in \eqref{eq:P*f}, that is for \eqref{eq:Wiener opt}, and then from it deduce a duality result for the original robust pricing problem $P_\Acal$ defined in \eqref{eq:insider}. The latter is the analogue of the super-replication duality in the present robust setting with additional information/beliefs. As in the classical (non-robust) case, this in turn leads to a dichotomy between existence of martingale measures and existence of arbitrage opportunities, the so-called fundamental theorem of asset pricing, which we prove at the end of this section.

A martingale $\phi$ is called $\Scal$\!-continuous if there exists a continuous $H:\Scal\to\R$ such that $\phi=H\circ r$. Note that a martingale which is $\Scal$\!-continuous has continuous paths, but the other implication is in general not true.

\begin{theorem}\label{thm:constrSEPdual}
 Let $\gamma :\Scal \to \R$ be \usc{} and bounded from above  in the sense of \eqref{eq:bdd above}, and $\Lambda\subseteq\Scal$ be closed. Set
\begin{align}\label{eq:dualconstraint}
D^*_\Lambda:=\inf\left\{ \int\psi d\mu : \begin{array}{l}
\psi \in \Ccal(\R), \exists \mbox{ an $\Scal$-continuous martingale $\phi$, $\phi_0=0$ s.t.}\\ 
\phi_t(\omega)+\psi(\omega(t))\geq \gamma\circ r(\omega,t) \text{ for all } (\omega,t)\in r^{-1}(\Lambda)\end{array}\! \right\},
\end{align}
where $\phi, \psi$  satisfy 
\begin{equation}\label{eq pp bdd}
|\phi_t(\omega)| \leq a + bt+c \omega(t)^2,\;\; |\psi(y)| \leq a+ b y^2,\quad \forall\ \omega\in\Ccal_{s_0}(\R^+),\; \text{for some $a,b,c>0$}.
\end{equation}
Then we have
$$ P_\Lambda^*=D_\Lambda^*~.$$
\end{theorem}
\begin{proof}
Put 
\begin{equation}\label{eq.bargamma}
\overline\gamma(f,t)= 
\begin{cases}
  \gamma(f,t) & \quad (f,t) \in \Lambda,\\
  -\infty & \quad (f,t) \in\Scal \setminus \Lambda.
\end{cases}
\end{equation}
Since we assume that $\gamma$ is \usc{} and bounded from above in the sense of \eqref{eq:bdd above}, it is easy to see that these properties are inherited by $\overline\gamma$.
Moreover, the well-posedness assumption of \eqref{eq:Wiener opt} for $\gamma$ implies that \eqref{eq:Wiener opt} is still well-posed for $\overline\gamma$ and $\Lambda=\Scal$. Hence, the result follows from \cite[Theorem 4.2]{BCH17}.
\end{proof}

This duality result is already of interest in its own right. However, to identify it as a superreplication result we need to recover the hedging strategies corresponding to the martingale. For this we need some kind of pathwise martingale representation theorem. In fact Theorem 6.2 of \cite{Vo12} can be interpreted as such. To this end, we need to introduce some more notation.

We will need the concept of simple strategy, by which we mean a process $H:\Omega_T^{\mathsf{qv}}\times\RR^+\to\RR$ of the form
\[
H_t(\omega)=\sum_{n\geq 0}K_n(\omega)1_{(\tau_n(\omega),\tau_{n+1}(\omega)]}(t),\quad (\omega,t)\in \Omega_T^{\mathsf{qv}}\times\RR^+,
\]
where $0=\tau_0(\omega)<\tau_1(\omega)<\ldots$ are $\Fcal^S$-stopping times such that for every $\omega$ one has $\lim_{n\to\infty}\tau_n(\omega)=\infty$, and $K_n:\Omega_T^{\mathsf{qv}}\to\RR$ are $\Fcal^S_{\tau_n}$-measurable bounded functions for $n\in\NN$. For such a strategy, we can define the corresponding pathwise stochastic integral as 
\[
(H\cdot S)_t(\omega)=\sum_{n\geq 0}K_n(\omega)(S_{\tau_{n+1}(\omega)\wedge t}-S_{\tau_{n}(\omega)\wedge t})(\omega).
\]
Then, following exactly the line of reasoning as for the proof of \cite[Theorem 3.1]{BCHPP15} one can get the following result. 
We recall that $F = \gamma \circ \ntt_T$ and $\Acal = \ntt^{-1}_T (\Lambda)$,  and that the robust pricing problem for the insider was defined in \eqref{eq:insider}.

\begin{theorem}\label{thm:superhedge}
 Let $\gamma$ be \usc{} and bounded from above  in the sense of \eqref{eq:bdd above}, and let $\Lambda\subseteq\Scal$. Set
$$D_\Acal:=\inf\left\{\int \psi(y)\, d\mu(y): \begin{array}{l}
 \psi \in \Ccal(\R), \exists \mbox{ simple strategies}\ (H^n)_n\ \textrm{s.t.} \\
\liminf_n(H^n\cdot S)_T
(\omega)+\psi(\omega(T))\geq
F(\omega)\ \mbox{for all $\omega\in\Acal$}
 \end{array}\!
 \right\},
 $$
 where $|\psi(y)| \leq a+ b y^2$ and $(H^n\cdot S)_t\geq -a-bt$ for some $a,b>0$ and all $t \in [0,T]$. Then we have
$$P_\Acal=D_\Acal.$$
\end{theorem}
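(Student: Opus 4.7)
The plan is to derive Theorem~\ref{thm:superhedge} by combining Theorem~\ref{thm:constrSEPdual} with two ingredients: the time-change identification $P_\Acal = P^*_\Lambda$ already used to formulate the constrained Skorokhod embedding problem, and a pathwise martingale representation that converts an $\Scal$-continuous martingale in the Wiener setting into a sequence of simple trading strategies on $\Ccal_{s_0}[0,T]$. The easy half, $P_\Acal \leq D_\Acal$, follows from weak duality; the harder inequality $P_\Acal \geq D_\Acal$ reduces to showing $D^*_\Lambda \geq D_\Acal$.

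\textbf{Weak duality.} Fix any $\Q \in \Mcal(\mu)$ with $\Q(\Acal)=1$, and any admissible $(\psi,(H^n))$ for $D_\Acal$. Since each $H^n$ is simple and $(H^n\cdot S)_t \geq -a-bt$, the process $H^n\cdot S$ is a $\Q$-supermartingale starting at zero; Fatou's lemma combined with the uniform lower bound yields $\E_\Q[\liminf_n (H^n\cdot S)_T] \leq 0$. Since the superhedging inequality holds for every $\omega\in\Acal$ and $\Q(\Acal)=1$, we get $\E_\Q[F(S)] \leq \int\psi\,d\mu$, and taking $\sup$/$\inf$ gives $P_\Acal\leq D_\Acal$.

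\textbf{The reverse inequality.} Chaining $P_\Acal=P^*_\Lambda$ (from the proposition preceding \eqref{eq:P*f}, via Dambis--Dubins--Schwarz) with $P^*_\Lambda=D^*_\Lambda$ (Theorem~\ref{thm:constrSEPdual}), it suffices to prove $D^*_\Lambda \geq D_\Acal$. Given an admissible $(\psi,\phi)$ for $D^*_\Lambda$, write $\phi = H\circ r$ for some continuous $H:\Scal\to\R$ and invoke Vovk's pathwise martingale representation (Theorem~6.2 of \cite{Vo12}) to produce simple integrands $(\tilde H^n)$ with $(\tilde H^n\cdot B)_t(\omega)\to\phi_t(\omega)$ pointwise on $\Omqv$, uniformly bounded below by $-a'-b't-c'B_t^2$. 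Inverting $\ntt_T$ transports $\tilde H^n$ to simple strategies $H^n$ on $\Ccal_{s_0}[0,T]$: because the pathwise quadratic variation clock is intrinsic to each path, the pathwise Itô integrals satisfy
\[
(H^n\cdot S)_T(\omega) = (\tilde H^n\cdot B)_{\langle\omega\rangle_T}(\ntt(\omega)), \qquad \omega\in\Omqv_T.
\]
Applying $\ntt_T$ to the dual inequality $\phi_t(\omega)+\psi(\omega(t))\geq\gamma\circ r(\omega,t)$ on $r^{-1}(\Lambda)$ and passing to the liminf yields
\[
\liminf_n (H^n\cdot S)_T(\omega)+\psi(\omega(T)) \geq F(\omega) \qquad\text{for every }\omega\in\Acal=\ntt_T^{-1}(\Lambda),
\]
while the quadratic bounds on $\phi$ translate, using the finite second moment $V$ of $\mu$ and \eqref{eq:V}, into the required lower bound $(H^n\cdot S)_t\geq -a-bt$ (after absorbing a constant into $\psi$). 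This exhibits $(\psi,(H^n))$ as an admissible candidate for $D_\Acal$ with the same cost $\int\psi\,d\mu$, giving $D^*_\Lambda\geq D_\Acal$.

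\textbf{Main obstacle.} The real work lies in the pathwise representation and the careful time-change bookkeeping: Vovk's construction lives only on $\Omqv$, one needs the approximating integrals to satisfy uniform lower bounds that survive the change of clock, and the superhedging inequality must transfer to every path in $\Acal$ rather than merely $\Q$-almost every one. In the unconstrained setting these steps were carried out in detail in \cite{BCHPP15}; the only new feature here is that the Wiener-space inequality is imposed on $r^{-1}(\Lambda)$ rather than on all of $\Ccal_{s_0}(\R_+)\times\R_+$, but since $\ntt_T$ was \emph{designed} so that $\Acal=\ntt_T^{-1}(\Lambda)$, the restricted inequality pulls back exactly to the feasibility set, and the argument of \cite{BCHPP15} applies without essential modification.
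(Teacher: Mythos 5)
Your proposal follows essentially the same route as the paper: the paper's own ``proof'' consists precisely of invoking Theorem~\ref{thm:constrSEPdual}, interpreting Vovk's Theorem~6.2 as a pathwise martingale representation to recover simple strategies from the $\Scal$-continuous martingale $\phi$, and deferring the time-change bookkeeping to the line of reasoning in \cite{BCHPP15}. Your outline (weak duality plus the chain $P_\Acal=P^*_\Lambda=D^*_\Lambda\geq D_\Acal$, with the constraint restricted to $r^{-1}(\Lambda)$ pulling back to $\Acal$ under $\ntt_T$) matches this, and correctly identifies the genuinely technical steps as the ones delegated to \cite{BCHPP15}.
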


Theorem \ref{thm:superhedge} is the analogue of the classical super-replication duality theorem, in the present robust insider setting. Moreover, like its classical counterpart, it additionally implies a version of the first fundamental theorem of asset pricing. In the following we will use Theorem \ref{thm:superhedge} with different payoff functions. To stress the dependence on the cost function we will sometimes write $P_\Acal(F), D_\Acal(F)$.
\begin{proposition}\label{prop:NA}
 Under the assumptions of Theorem~\ref{thm:superhedge}, the following are equivalent:
\begin{itemize}
\item[(i)] $\exists~\QQ\in\Mcal(\mu)$ such that $\QQ(\Acal)=1$;
\item[(ii)] $\RST(\mu;\Lambda)\neq\emptyset$;
\item[(iii)] $\nexists~\epsilon>0$, simple strategies $(H^n)_n$, and $\psi\in\Ccal(\R)$ with $\int \psi~d\mu = 0$ such that 
   \begin{equation}
    \label{eq:NA1}
   \liminf (H^n\cdot S)_T(\omega) + \psi(\omega(T)) \ge \epsilon, \text{ for all }  \omega \in \Acal.
  \end{equation}  
\end{itemize}
\end{proposition}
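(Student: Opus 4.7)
The plan is to establish (i) $\Leftrightarrow$ (ii) by unwinding the Dambis--Dubins--Schwarz correspondence set up earlier in the section, and then close the loop via (ii) $\Rightarrow$ (iii) $\Rightarrow$ (i) using the super-replication duality of Theorem~\ref{thm:superhedge} applied to the trivial payoff $F \equiv 0$. The first equivalence is essentially a restatement: the proposition preceding \eqref{eq:Wiener opt} gives a bijection $\Q \mapsto \xi = \Q \circ (\ntt_T(S), \langle S\rangle_T)^{-1}$ between $\Mcal(\mu)$ and $\RST(\mu)$, and by \eqref{eq:feasible sets} the identity $1_\Acal = 1_\Lambda \circ \ntt_T$ holds on $\Omqv_T$, so $\{\Q(\Acal)=1\}$ corresponds precisely to $\xi \in \RST(\mu;\Lambda)$.

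For (ii) $\Rightarrow$ (iii), I would argue by contradiction: fix $\Q \in \Mcal(\mu)$ with $\Q(\Acal)=1$ and suppose $\epsilon, (H^n), \psi$ witnessing the failure of (iii) were given. Each $H^n \cdot S$ is a continuous local $\Q$-martingale starting at $0$ that is bounded below by $-a-bT$, hence a $\Q$-supermartingale, so $\E_\Q[(H^n \cdot S)_T] \leq 0$. Fatou's lemma (applicable thanks to the uniform lower bound $(H^n \cdot S)_T \geq -a-bT$) then yields $\E_\Q[\liminf_n (H^n \cdot S)_T] \leq \liminf_n \E_\Q[(H^n\cdot S)_T] \leq 0$. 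Since $S_T \sim_\Q \mu$ and $\int \psi\, d\mu = 0$, integrating \eqref{eq:NA1} against $\Q$ gives $0 \geq \epsilon > 0$, contradicting $\epsilon > 0$.

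For $\neg$(i) $\Rightarrow$ $\neg$(iii), I would apply Theorem~\ref{thm:superhedge} to $F \equiv 0$ (equivalently $\gamma \equiv 0$, which trivially satisfies \eqref{eq:bdd above}). If (i) fails, then $P_\Acal = \sup \emptyset = -\infty$, so by the duality $D_\Acal = -\infty$, producing $\psi \in \Ccal(\R)$ with $\int \psi\, d\mu \leq -1$ and admissible simple strategies $(H^n)$ satisfying $\liminf_n (H^n \cdot S)_T + \psi(\omega(T)) \geq 0$ on $\Acal$. Setting $\tilde\psi := \psi - \int \psi\, d\mu$ preserves the quadratic growth bound, has zero $\mu$-mean, and
\[
  \liminf_n (H^n \cdot S)_T + \tilde\psi(\omega(T)) \geq -\!\int \psi\, d\mu \geq 1 \quad \text{for all } \omega \in \Acal,
\]
contradicting (iii) with $\epsilon = 1$. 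The main subtlety is ensuring that $P_\Acal = D_\Acal$ retains its content in the degenerate case $P_\Acal = -\infty$ — i.e.\ that $D_\Acal = -\infty$ really does deliver superhedges of arbitrarily negative cost — and verifying that constant shifts preserve the admissibility class in Theorem~\ref{thm:superhedge}; both are routine once flagged, and the supermartingale step in (ii) $\Rightarrow$ (iii) is standard given that simple integrands against a continuous $\Q$-martingale produce local martingales.
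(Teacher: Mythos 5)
Your proposal is correct, and it matches the paper on two of the three legs: the equivalence (i) $\Leftrightarrow$ (ii) is, as you say, just the Dambis--Dubins--Schwarz correspondence from the start of Section~2, and your $\neg$(i) $\Rightarrow$ $\neg$(iii) step is exactly the paper's (iii) $\Rightarrow$ (i) argument (apply Theorem~\ref{thm:superhedge} to $F\equiv 0$, get $D_\Acal=P_\Acal=-\infty$, and normalise $\psi$ to zero $\mu$-mean; the paper leaves the normalisation implicit). Where you genuinely diverge is the direction (i)/(ii) $\Rightarrow$ (iii). The paper again invokes the duality theorem: from (i) it gets $D_\Acal = P_\Acal = 0$ for the zero payoff, while a witness to the failure of (iii) would give a superhedge of $0$ on $\Acal$ at cost $\int(\psi-\epsilon)\,d\mu = -\epsilon$, contradicting $D_\Acal = 0$. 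You instead give the classical ``easy direction of FTAP'' argument: under a calibrated $\Q$ with $\Q(\Acal)=1$, each $(H^n\cdot S)$ is a local martingale bounded below, hence a supermartingale, and Fatou plus $\int\psi\,d\mu=0$ turns \eqref{eq:NA1} into $0\ge\epsilon$. Your route is more elementary and self-contained for that direction --- it uses only the weak (model-by-model) half of superhedging rather than the full strength of Theorem~\ref{thm:superhedge} --- at the cost of having to verify the supermartingale property of simple integrals and the integrability of $\psi(S_T)$ (both fine here, given the admissibility bounds $(H^n\cdot S)_t\ge -a-bt$, $|\psi(y)|\le a+by^2$ and the finite second moment of $\mu$). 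The paper's version is shorter but leans on the duality theorem even where it is not needed. Both are valid; no gaps.
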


Property $(iii)$ means that one cannot make arbitrary profits by starting with zero capital. Indeed, if \eqref{eq:NA1} holds for some $\hat\epsilon>0$, then it does so for any $\epsilon>0$.
\begin{proof}
The equivalence between $(i)$ and $(ii)$ follows from the arguments around Proposition \ref{prop:conSEP}.

$(i)\Rightarrow(iii)$:\ Note that $(i)$ implies $D_\Acal(\tilde F_0)=P_\Acal (\tilde F_0) =0$ for any derivative $\tilde F_0$ s.t. $\tilde F_0=0$ on $\Acal$, by Theorem~\ref{thm:superhedge}. Pick 
$$ F_0=\begin{cases}
        0 & \text{ on } \Acal\\
-\infty & \text{ else }
       \end{cases}.$$
Suppose, for contradiction, that there exist $\epsilon, (H^n)_n$ and $\psi$ s.t. \eqref{eq:NA1} is satisfied. Then, 
the  pair $((H^n)_n,\psi-\epsilon)$ is admissible for the dual problem $D_\Acal(F_0)$. However, this implies $ D_\Acal(F_0)\leq-\epsilon$ for $F_0$, which gives the desired contradiction.\\
$(iii)\Rightarrow(i)$:\
By Theorem~\ref{thm:superhedge}, if there is no measure $\QQ\in\Mcal(\mu)$ such that $\QQ(\Acal)=1$, then $D_\Acal=P_\Acal = -\infty$ for all derivatives $F$. In particular,  for 
$$ F_\eps=\begin{cases}
        \eps & \text{ on } \Acal\\
-\infty & \text{ else }
       \end{cases},$$
there exist $(H^n)_n$ and $\psi$, with $\int \psi~d\mu = 0$, such that \eqref{eq:NA1} holds.
\end{proof}

\begin{remark}
In this paper, we have only considered the case where the information of future call prices at a single fixed time $T$ is observed. Using similar methods to those developed in \cite{BCHPP15}, it is also possible to extend Theorems~\ref{thm:constrSEPdual} and \ref{thm:superhedge} to the case where the call prices at times
$0 \le s_1 \le s_2 \le \dots \le s_N = T$ are observed, and provide a related formulation in the Brownian setup where the optimisation is over a sequence of stopping times $\tau_1 \le \tau_2 \le \dots \le \tau_N = \tau$. In this case, it is possible to consider both the cases where call price information completely fixes the distributions at the intermediate times, or it only determines the integral of particular functions, or there is a mixture of some times having full information and others lacking it. In this more general setup, it becomes possible to include a large class of options, for example, a robust approach to discretely monitored Asian options could be included.
\end{remark}

\subsection{Constrained monotonicity principle}
In this section, we provide a modified version of the monotonicity principle of \cite{BCH17} giving necessary geometric conditions on the support set of an optimiser to \eqref{eq:Wiener opt}.

To this end, we denote the concatenation of two paths $(f,s),(g,t)\in\Scal$ by $f\oplus g$, i.e.
$$ f\oplus g(u)=\begin{cases}
                 f(u) & u \leq s,\\
f(s)+g(u-s)-g(0) & s\leq u\leq  s+t.
                \end{cases}$$
For $(f,s)\in\Scal$ we define the process $\gamma^{(f,s)\oplus}(\omega,t):= \gamma(f\oplus \omega|_{[0,t]},s+t).$

\begin{definition}\label{def:SGpairs}
 A pair $((f,s),(g,t))\in\Scal\times\Scal$ is called \emph{feasible stop-go pair}, written $((f,s),(g,t))\in\SG_\Lambda$, if $f(s)=g(t)$, $(f,s)\in\Lambda$,  the set of $(\Fcal_t^W)_{t\geq 0}$ stopping times $\sigma$ satisfying $0<\E[\sigma]<\infty$ and $1_\Lambda\circ r(f\oplus W,s+\sigma)=1$ a.s. is non-empty, and every such stopping time satisfies 
\begin{align}\label{eq:SGpairs}
 \E[\gamma^{(f,s)\oplus}((W_u)_{u\leq\sigma},\sigma)] + \gamma(g,t) < \gamma(f,s) + \E[\gamma^{(g,t)\oplus}((W_u)_{u\leq\sigma},\sigma)]~,
\end{align}
and  $1_\Lambda\circ r(g\oplus W,t+\sigma)=1$ a.s., where both sides of \eqref{eq:SGpairs} are well defined and the left hand side is finite. Here, the probability space is assumed to be rich enough to support a Brownian motion $W$, 
and $(\Fcal^W_t)_{t\geq 0}$ denotes the natural filtration generated by $W$.
\end{definition}
 The interpretation is that on average it is better to \emph{stop} a path at time $s$ with history $f$, and to run the paths that would have carried on from $(f,s)$ from a previously stopped history $(g,t)$ (to let $(g,t)$ \emph{go}),
as long as this results in a feasible stopping rule.  Note that since $f(s) = g(t)$, the law of the stopped process is not changed.
We remark here that -- as a consequence of only considering $(\Fcal^W_t)_{t\geq 0}$ stopping times -- the definition of feasible stop-go pairs is independent of the probability space on which $\sigma$ lives as long as it is rich enough to support the Brownian motion $W$.   
 In a similar manner to \cite[Section 5]{BCH17}, one could introduce an even stronger notion of feasible stop-go pairs only considering one particular candidate stopping time.  In this article, we do not need this generality.


For a set $\Gamma\subset \Scal$ we denote by $\Gamma^<$ the set of all stopped paths which have a proper extension in $\Gamma$:
\[
\Gamma^<:=\{(f,s)\in\Scal : \exists (g,t)\in\Gamma,\, s<t,\, g|_{[0,s]}=f\}.
\]
\begin{definition}\label{def:gamma monotone}
 A set $\Gamma\subset \Lambda$ is called \emph{feasible $\gamma$-monotone} if
$$\SG_\Lambda\cap(\Gamma^<\times\Gamma)=\emptyset~.$$
\end{definition}

A set $\Gamma\subset \Scal$ should be viewed as a possible stopping set, i.e.\ a set of paths $(\omega|_{[0,\tau]},\tau)$
for an admissible stopping strategy $\tau$ in \eqref{eq:Wiener opt}. If such a set $\Gamma$ is feasible $\gamma$-monotone then there is no way of changing the stopping rule in a pathwise fashion, as in \eqref{eq:SGpairs}, resulting in a feasible stopping rule with higher payoff.

\begin{theorem}[Constrained Monotonicity Principle]\label{thm:mp}
 Let $\gamma:\Scal\to\R$ be Borel. Assume that \eqref{eq:Wiener opt} is well posed and that $\xi\in\RST(\mu;\Lambda)$ is an optimiser. Then there exists a feasible $\gamma$-monotone set $\Gamma\subset\Scal$ such that 
$$\xi(r^{-1}(\Gamma))=1.$$
\end{theorem}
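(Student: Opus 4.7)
The plan is to follow the monotonicity-principle argument of \cite[Theorem 5.7]{BCH14}, adapting it to respect the feasibility constraint encoded by $\Lambda$. The strategy is by contradiction: assume no feasible $\gamma$-monotone $\Gamma\subset\Scal$ with $\xi(r^{-1}(\Gamma))=1$ exists, then produce a competitor $\xi'\in\RST(\mu;\Lambda)$ with strictly larger objective, contradicting optimality of $\xi$. The whole construction is the familiar ``cut-and-paste'': along a measurable collection of pairs $((f,s),(g,t))\in\SG_\Lambda$, with $(f,s)\in\Gamma^<$ and $(g,t)\in\Gamma$, one redirects infinitesimal mass by stopping at $(f,s)$ instead of continuing, and letting the path through $(g,t)$ run instead of stopping. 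By the inequality \eqref{eq:SGpairs} this strictly increases $\int\gamma\circ r\,d\xi$, provided such a swap can be assembled in a measurable and mass-balanced way.

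First I would introduce the two-sided measure on $\Scal\times\Scal$ obtained, as in \cite{BCH14}, by taking $\xi\otimes\xi$ and pushing forward under $(\omega,s,\omega',t)\mapsto(r(\omega,s),r(\omega',t))$, and its natural disintegration $\pi=\pi_{f,g}\otimes\nu(df,dg)$ along pairs with common endpoint $f(s)=g(t)$. One then runs the exhaustion/selection lemma of \cite{BCH14} to select, if no $\gamma$-monotone $\Gamma$ exists, a Borel map $((f,s),(g,t))\mapsto\sigma_{f,g}$ assigning to each bad pair a test stopping time witnessing \eqref{eq:SGpairs}, with a uniform lower bound on the gain on a set of positive $\nu$-mass. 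Upgrading the pointwise inequality to an integrated strict improvement is done exactly as in \cite{BCH14}, using Jankov--von~Neumann measurable selection together with a $\sigma$-compact exhaustion to ensure measurability of $\SG_\Lambda$ (closedness of $\Lambda$ helps here).

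The only genuinely new ingredient is bookkeeping the constraint $\int 1_\Lambda\circ r\,d\xi=1$ through the swap. The definition of $\SG_\Lambda$ has been engineered so that this is automatic: the condition $(f,s)\in\Lambda$ ensures that the extra mass stopped at $(f,s)$ lies in $\Lambda$, while restricting the quantifier in \eqref{eq:SGpairs} to stopping times $\sigma$ with $1_\Lambda\circ r(g\oplus W,t+\sigma)=1$ a.s.\ ensures that the extra mass released from $(g,t)$ is stopped only on feasible continuations. Hence the competitor $\xi'$ constructed by the swap is again supported on $r^{-1}(\Lambda)$, still has marginal $\mu$ (endpoints are matched in the disintegration), and is still a randomised stopping time, i.e.\ $\xi'\in\RST(\mu;\Lambda)$.

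The main obstacle, just as in \cite{BCH14}, is the measurable-selection step together with checking that the test stopping times appearing in the definition of $\SG_\Lambda$ can indeed be realised inside the kernels of $\xi$ restricted to feasible continuations. This is where it matters that $\xi_\omega$ itself already concentrates on feasible continuations $\W$-a.s., so that the conditional distributions arising from the disintegration may serve as admissible $\sigma$'s. Once this is in place, the proof reduces to the argument already carried out in \cite[Sect.~5]{BCH14} with $\SG$ and $\RST(\mu)$ replaced by $\SG_\Lambda$ and $\RST(\mu;\Lambda)$, and we conclude by defining $\Gamma$ as a feasible $\gamma$-monotone set of full $\xi$-mass obtained as the complement of the bad selection.
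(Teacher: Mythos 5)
Your route is genuinely different from the paper's, and it contains a gap at precisely the one point where the constrained problem differs from the unconstrained one. The cut-and-paste swap transplants the conditional continuation rule of $\xi$ after $(f,s)$ to the path $(g,t)$. You claim this automatically preserves the constraint because ``$\xi_\omega$ itself already concentrates on feasible continuations'' and because the quantifier in \eqref{eq:SGpairs} is restricted to $\sigma$ with $1_\Lambda\circ r(g\oplus W,t+\sigma)=1$ a.s. Both halves of this are off. What $\xi\in\RST(\mu;\Lambda)$ guarantees is that $(f\oplus W|_{[0,\sigma]},\,s+\sigma)\in\Lambda$ a.s.; feasibility is a property of the \emph{entire} stopped path, so there is no reason why $(g\oplus W|_{[0,\sigma]},\,t+\sigma)$ should lie in $\Lambda$ — e.g.\ for a drawdown-type constraint as in \eqref{eq:34}, $f$ and $g$ may have the same endpoint but different running maxima, and a continuation that keeps $f\oplus W$ feasible can immediately violate feasibility for $g\oplus W$. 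Consequently the competitor $\xi'$ need not belong to $\RST(\mu;\Lambda)$. Moreover, the restriction of the quantifier in Definition~\ref{def:SGpairs} works against you here, not for you: it means the defining inequality of $\SG_\Lambda$ is only \emph{assumed} for $\sigma$'s that are feasible after $(g,t)$, so it need not apply to the specific transplanted $\sigma$ your construction produces. Your argument therefore delivers neither an admissible competitor nor a guaranteed strict gain.

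The paper sidesteps all of this with a two-line reduction: define $\overline\gamma=\gamma$ on $\Lambda$ and $\overline\gamma=-\infty$ on $\Scal\setminus\Lambda$, check that \eqref{eq:Wiener opt} remains well posed for $\overline\gamma$ with the trivial constraint $\Lambda=\Scal$, and apply the unconstrained monotonicity principle \cite[Theorem 5.7]{BCH14}. There the swap only needs to produce a competitor in $\RST(\mu)$; if the transplanted continuation ever exits $\Lambda$, the $\overline\gamma$-value of that competitor is $-\infty$, which cannot beat the optimiser, and the stop-go set for $\overline\gamma$ coincides with $\SG_\Lambda$ exactly because finiteness of the right-hand side of \eqref{eq:SGpairs} for $\overline\gamma$ forces $(f,s)\in\Lambda$ and $\sigma$ to be feasible after $(g,t)$. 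If you want to salvage a direct adaptation, you would need an extra argument handling the bad pairs whose transplanted continuation is infeasible after $(g,t)$ — which is, in effect, what the $-\infty$ extension does for free.
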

\begin{proof}
Taking $\overline\gamma$ as in \eqref{eq.bargamma}, the result follows from \cite[Theorem 5.7]{BCH17}.
\end{proof}

The Constrained Monotonicity Principle will be an important tool to characterise solutions to \eqref{eq:Wiener opt}, and in particular will allow us to deduce geometric features of optimisers. We will illustrate this in the subsequent sections.

\section{No-arbitrage, pricing and hedging in specific information settings}\label{sec:examples}

Up to now we have shown that under our assumptions the robust pricing problem \eqref{eq:insider}
can be reformulated as a constrained Skorokhod embedding problem for which we have established general results on existence, superheding, a variant of the first fundamental theorem of asset pricing, and a characterisation of optimisers. 

The goal of this section is to illustrate the richness of our framework by
considering some natural choices for the insider's information set $\Acal$, or equivalently for the corresponding feasibility set $\Lambda$, 
to show that under additional assumptions we are able to prove a variety of very explicit results in the insider's setting.

We use the notation $\prec$ to denote the convex order relation between probability measures; specifically, we say that $\lambda \prec \mu$ if $\int c(x) \lambda(dx) \le \int c(x) \mu(dx)$ for any convex function $c$.

In the examples we consider, we will typically address three related questions: 
\begin{enumerate}
\item Given a pair $(\mu, \Lambda)$, when does there exist any consistent model for the insider agent? Specifically, is $\RST(\mu;\Lambda)$ non-empty? We address these points in Theorems~\ref{thm:RSTExists}, \ref{thm:root}, and \ref{thm:exay}.

\item Assuming $\RST(\mu;\Lambda)\neq\emptyset$, can we characterise the worst case scenarios for the insider, i.e.\ can we characterise solutions to the constrained Skorokhod embedding problem?  We provide a characterisation of the optimisers to a specific problem in Theorem~\ref{thm:constRoot}.

\item Given a pair $(\mu, \Lambda)$ such that $\RST(\mu;\Lambda) \neq \emptyset$, and a derivative with payoff $F$, what is the value of $P_\Lambda^*$, and how does this differ from $P_\Scal^*$, the price of the uninformed agent? 
We answer these questions in the context of a specific example in Section~\ref{sect:var}.

\end{enumerate}

In investigating the questions above, we will focus on  the three following natural examples where the additional information/beliefs translates into stopping the Brownian motion after and/or before given stopping times.  Let
$\underline{\tau}, \overline\tau$ be  stopping times such that $\underline{\tau}\leq\overline\tau$, and $(W_{t\wedge\underline{\tau}})_{t\geq 0}, (W_{t\wedge\overline{\tau}})_{t\geq 0}$ are uniformly integrable, and consider the sets
\begin{equation}\label{ex:l}
\Lambda_1=\{r(\omega,t) : t\leq\overline{\tau}(\omega)\},\; \Lambda_2=\{r(\omega,t) : t\geq\underline\tau (\omega)\},\; \Lambda_3=\{r(\omega,t) : \underline{\tau}(\omega)\leq t\leq\overline\tau (\omega)\}.
\end{equation}
These cases notably cover the examples of additional information and beliefs mentioned at the beginning of the paper, whether prices hit certain barriers, whether the quadratic variation reaches certain levels (cf.\ Section \ref{sect:Root}), and on drawdown constraints (cf.\ also Section \ref{sect:AY}).

In this situation we have the following basic result on the existence or absence of a consistent model for the insider.

\begin{theorem}\label{thm:RSTExists}
Suppose that the insider has information given by \eqref{ex:l}. We write $W_{\underline{\tau}}\sim\underline{\mu}, W_{\overline\tau}\sim\overline\mu$. Then:
\begin{enumerate}[label=(\arabic*)]
  \item \label{item:1_RST} $\Lambda = \Lambda_1$: the set $\RST(\mu;\Lambda) = \emptyset$
    if $\mu \nprec \overline\mu$;
  \item \label{item:2_RST} $\Lambda = \Lambda_2$: the set $\RST(\mu;\Lambda) = \emptyset$
    if and only if $\underline\mu \nprec \mu$;
  \item \label{item:3_RST} $\Lambda = \Lambda_3$: the set $\RST(\mu;\Lambda) = \emptyset$
    if $\underline{\mu} \nprec \mu$ or $\mu \nprec \overline{\mu}$.
\end{enumerate}
In particular, if any of the conditions on the measures  $\underline{\mu}, \mu, \overline{\mu}$ above hold, then the insider can make unlimited profit in the sense of \eqref{eq:NA1}.
\end{theorem}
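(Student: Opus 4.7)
The plan is to work in the Wiener-space formulation: as used to derive \eqref{eq:Wiener opt}, each $\xi\in\RST(\mu;\Lambda_i)$ corresponds to a stopping time $\tau$ on a filtered probability space supporting a Brownian motion $W$ with $W_0=s_0$, such that $W_\tau\sim\mu$, $(W_{t\wedge\tau})_{t\ge 0}$ is uniformly integrable, and $\tau\le\overline\tau$ (case~\ref{item:1_RST}), $\tau\ge\underline\tau$ (case~\ref{item:2_RST}), or both (case~\ref{item:3_RST}). I would then handle necessity of the convex-order conditions in each case, sufficiency in \ref{item:2_RST}, and the unlimited-profit statement in turn.

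For the necessity direction, I would use that for every call payoff $c_K(x):=(x-K)^+$, the process $(c_K(W_{t\wedge\overline\tau}))_{t\ge 0}$ is a uniformly integrable submartingale (the UI of the underlying martingale is already in the hypotheses, and $c_K$ is Lipschitz), and similarly for $\underline\tau$ and for $\tau$. Optional stopping at $\tau\le\overline\tau$ gives $\int c_K\,d\mu\le\int c_K\,d\overline\mu$ in case~\ref{item:1_RST}; at $\underline\tau\le\tau$ it gives $\int c_K\,d\underline\mu\le\int c_K\,d\mu$ in case~\ref{item:2_RST}. Combined with equality of first moments (all three measures have barycentre $s_0$ since the underlying stopped martingales are UI), testing against all $c_K$ characterises convex order and yields the stated necessary conditions; the contrapositive gives the emptiness claims in all three parts.

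For sufficiency in \ref{item:2_RST}, assume $\underline\mu\prec\mu$. By the strong Markov property, after time $\underline\tau$ the increments of $W$ form a Brownian motion independent of $\Fcal_{\underline\tau}$, with $W_{\underline\tau}\sim\underline\mu$. Strassen's theorem supplies a martingale kernel $K(x,dy)$ with barycentre $x$ such that $\mu=\int K(x,\cdot)\,\underline\mu(dx)$; a measurable choice of a standard Skorokhod embedding (e.g.\ Chacon--Walsh or Bass) of $K(x,\cdot)$ started at $x$ then produces a post-$\underline\tau$ stopping time $\sigma$ with $W_{\underline\tau+\sigma}\sim\mu$. Setting $\tau:=\underline\tau+\sigma$ and invoking \eqref{eq:V} to conclude $\E[\tau]=\int(x-s_0)^2\,d\mu<\infty$ delivers uniform integrability of $(W_{t\wedge\tau})$ and hence an element of $\RST(\mu;\Lambda_2)$.

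The unlimited-profit assertion is a one-line consequence of Proposition~\ref{prop:NA}: emptiness of $\RST(\mu;\Lambda)$ is the negation of~$(ii)$ and hence of~$(iii)$, so for every $\epsilon>0$ there exist simple strategies $(H^n)_n$ and $\psi\in\Ccal(\R)$ with $\int\psi\,d\mu=0$ realising \eqref{eq:NA1}. The step I expect to be most delicate is the measurable-selection in the converse of \ref{item:2_RST}: realising the kernel $K(x,\cdot)$ through a jointly measurable stopping rule that preserves the required integrability. A clean alternative, if the pointwise construction becomes cumbersome, would be to work directly at the level of randomised stopping times and concatenate the $\RST$ produced by \cite[Thm.~3.14]{BCH14} for the embedding $\underline\mu\to\mu$ onto the stopping time $\underline\tau$, which sidesteps any measurable selection entirely.
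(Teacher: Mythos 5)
Your proposal is correct and follows essentially the same route as the paper, which derives the theorem from Strassen's theorem (for the equivalence in case (2)), the necessity of convex order under UI stopping (for cases (1) and (3), citing Meilijson and van der Vecht for non-sufficiency in case (1)), and Proposition~\ref{prop:NA} for the unlimited-profit statement. You merely unpack the "easy" direction of Strassen via optional sampling of the call payoffs and spell out the concatenation argument for sufficiency in case (2), both of which are exactly what the paper's citations stand in for.
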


\begin{proof}
 As a consequence of Strassen's Theorem \cite{Strassen:1965aa},  a solution to the constrained problem \eqref{eq:P*f} exists for $\Lambda_2$ if and only if $\underline{\mu}\prec\mu$. Similarly, in the case of $\Lambda_1$, the condition $\mu \prec \overline{\mu}$ is a necessary condition for the existence of a stopping time $\tau \le \overline \tau$ for the Brownian motion such that $W_{\tau} \sim \mu$, but it is not sufficient unless $\overline \mu$ is supported on two points due to the result of Meilijson \cite{Me82} and van der Vecht \cite{vdV86}.\footnote{A simple example can be constructed by considering the measures $\overline{\mu} = N(0,1)$, with stopping time $\overline{\tau} = 1$ and $\mu = \frac{\eps}{2} (\delta_1 + \delta_{-1}) + (1-\eps) \delta_0$. For $\eps$ sufficiently small, it is easily checked that $\mu \prec \overline{\mu}$, but there is no bounded stopping time embedding $\mu$. } Combining these two observations yields the third item.
\end{proof}

 To the best of our knowledge, necessary and sufficient conditions for the existence of $\xi\in\RST(\mu;\Lambda_1)$ are unknown. We are able to provide them in specific settings (see Section~\ref{sec:barrier}), while the existence of general criteria remains an interesting open problem.

Before we proceed, we would like to remark on the specific form of the feasibility sets $\Lambda$ in \eqref{ex:l}.
It is clear that, in general, not all information processes/feasibility sets are of the form \eqref{ex:l}. A full classification and analysis is beyond the scope of this paper. 
One of the various reasons that makes this analysis complicated is that the constraint $\tau \in \RST(\mu)$ may
impose additional conditions that are not immediate from the
construction of $\Lambda$. Consider for example the case where
\begin{equation}\label{eq:34}
\Lambda = \Big\{(f,t) \in \Scal : \sup_{s \le t} f(s)-c \le f(t)\Big\},
\end{equation}
for some fixed $c\in\R_+$, which corresponds to the drawdown constraint on the price process not dropping more than $c$ below its maximum-to-date value.
Minimality (cf.\ \eqref{eq:minimal}) implies that an admissible stopping time must occur before
$\overline{\tau} := \inf\{t \ge 0: \sup_{s \le t} \omega(s)-c > \omega(t)\}$,
by a simple martingale argument. Hence, although there exist feasible paths in $\Lambda$ which live longer
than $\overline{\tau}$, any $\tau$ which is in $\RST(\mu)$ must, with probability one, be bounded above by $\overline{\tau}$. Therefore, the set of feasible
stopped paths in this case  can be replaced by $\Lambda'=\{r(\omega,t) : t\leq\overline{\tau}\}$. Then, from the argument above, $\mu\prec\overline \mu\sim W_{\overline{\tau}}$ must hold
in order to have  a solution to the constrained embedding problem.  
\medskip\\
On the other hand, if the set of admissible evolutions for the asset is 
\begin{equation*}
\Lambda = \Big\{(f,t) \in \Scal : \sup_{s \le t} f(s)-c \le
\frac{1}{t} \int_0^tf(s)\, \di s\Big\},
\end{equation*}
which is a drawdown constraint where the constraint depends on the running average of the price process, then we are not able to replace $\Lambda$ by a `nice' set $\Lambda'\subset\Lambda$ as above.
Here the class of admissible stopping times is certainly bounded
above by a stopping time ($\inf\{t \ge 0: \max\{t^{-1}\int_0^{t}f(s)\, 
\di s, f(t)\}\
\le \sup_{s \le t} f(s)-c\}$), but it is easily seen that there are inadmissible paths which occur before this time.
\medskip\\
In what follows we will consider the cases in Theorem~\ref{thm:RSTExists} separately, analysing them in specific settings. In particular, in Sections~\ref{sect:AY} and~\ref{sect:Root} we present two frameworks where the additional information $\Lambda$ is of the kind $\Lambda_1$ in \eqref{ex:l}, and we are able to give necessary and sufficient conditions for the set $\RST(\mu;\Lambda)$ to be non-empty, hence strengthening the result in case~\ref{item:1_RST} of Theorem~\ref{thm:RSTExists}. In Theorem \ref{thm:constRoot}, we will exemplify the power of the monotonicity principle by showing the structure of the solutions to a Root-type optimisation problem with an Az\'ema-Yor-type constraint.
Moreover, in Section~\ref{sect:var} we consider the additional information $\Lambda$ to be of the kind $\Lambda_2$ in \eqref{ex:l} and, for options on variance, we determine the primal optimisers by means of our constrained monotonicity principle (Theorem~\ref{thm:mp}), as well as the dual optimisers.

We remark that the first two cases imply results and constraints for the third case also, e.g.\ Theorems \ref{thm:exay} and \ref{thm:root} directly imply necessary conditions for the third case. More generally, using the monotonicity principle Theorem \ref{thm:mp} one can derive the corresponding versions of Root and Az\'ema-Yor embedding with a general time-space starting law (cf.\ Section \ref{sect:var} for the case of Root). Using similar arguments as in the proof of Theorem \ref{thm:exay} and \ref{thm:root}, with slightly more notation, one can derive the corresponding versions of these results  keeping also track of the condition $\underline\tau \leq t$ implying necessary and sufficient conditions for the case $\Lambda_3$. We omit the details.

\subsection{Information as barrier in a certain phase space.}\label{sec:barrier}
We now consider the case where the additional information is of the kind of $\Lambda_1$ in \eqref{ex:l} and translates in having a barrier in a certain phase space. We will see how in this situation the No-Arbitrage condition (cf.\ Proposition \ref{prop:NA}, Theorem \ref{thm:RSTExists}) imposes an order between such a barrier, and the barrier characterising the unique optimal stopping for the uninformed agent in such a phase space. These results are notable since the ordering of barriers is a much weaker condition than the convex order condition, significantly strengthening the results of Theorem \ref{thm:RSTExists}.

\subsubsection{The Root phase space}\label{sect:Root} 
We recall that the Root solution of the (unconstrained) Skorokhod embedding problem for the distribution $\mu$ is given by
\[
\tau_{Root}(\mu)=\inf\{t\geq 0 : (t,W_t)\in\mathcal{R}\},
\]
where $\mathcal{R}$ is a closed \emph{barrier}, that is, $(t,x) \in \mathcal{R}$ implies $(s,x) \in \mathcal{R}$ for $s>t$; see \cite{Root69}. This is one of the first known solutions to SEP, and is optimal when $\gamma(f,t)=h(t)$ for a strictly convex function $h$.  The Root solution is illustrated in Figure~\ref{fig:Root}.  To avoid trivialities, we assume that our barriers are \emph{regular} (see \cite{COT15}), that is, they are closed and $\{x: (0,x) \not\in \mathcal{R}\}$ is an open interval, containing the origin; any barrier which is not regular can be replaced by a regular barrier without changing the hitting time. Any regular barrier can be described by its \lsc{} barrier function $\mathcal R(x)=\inf\{t:(t,x)\in\mathcal R\}.$
\begin{figure}[ht]
  \centering
\begin{asy}[width=0.7\textwidth]
  import graph;
  import stats;
  import patterns;

  // import settings;

  // gsOptions="-P"; 

  // Construct a Brownian motion of time length T, with N time-steps
  int N = 200;
  //int N = 300;
  real T = 1.6*1.6;
  real dt = T/N;
  real B0 = 0;

  real sig = 0.7;

  real xmax = 0.8;
  real xmin = -0.8;

  real tmax = (xmax-xmin)*1.6;

  real[] B; // Brownian motion
  real[] t; // Time

  path BM;

  // Seed the random number generator. Delete for a "random" path:
  srand(123);

  B[0] = B0;
  t[0] = 0;

  BM = (t[0],B[0]);

  // Define a barrier

  real R(real y) {return 1.85 + ((y-1)**4)/8-(y**2)/1.5 -(y**6)*4;}

  int H = N+1;
  int H2 = N+1;
  int BMstop;
  int BMstop2 = N+2;

  for (int i=1; i<N+1; ++i)
  {
    B[i] = B[i-1] + sig*Gaussrand()*sqrt(dt);
    t[i] = i*dt;

    if ((H==N+1)&&(t[i]>=R(B[i])))
    {
	H = i;
	BMstop = length(BM);
       BM = BM--(R(B[i]),B[i]);
    }
    else
    {
      BM = BM--(t[i],B[i]);
    }

  }

  if (H==N+1)
  BMstop = length(BM);

  pen p = deepgreen + 1.5;
  pen p2 = lightgray + 0.25;
  //pen p2 = mediumgray + 1;

  //if (H<N+1)
  //draw(subpath(BM,BMstop,BMstop2),p2);

  pair tau = point(BM,BMstop+1);

  pen q = black + 0.5;

  real eps1 = 0.05;
  real eps2 = 0.15;

  path barrier = (graph(R,identity,xmin+eps1,xmax-eps1)--(tmax-eps2,xmax-eps1)--(tmax-eps2,xmin+eps1)--cycle);

  add("hatch",hatch(1mm,W,mediumgray));

  fill(barrier,pattern("hatch"));

  draw(graph(R,identity,xmin+eps1,xmax-eps1),NW,deepblue+0.5);

  draw((0,xmin)--(0,xmax+eps1),q,Arrow);
  draw((0,0)--((T+eps1),0),q,Arrow);

  draw((tau.x,0)--tau,mediumgray+dashed);
  label("$\tau_{Root}(\mu)$",(tau.x,0),S);

  draw(subpath(BM,0,BMstop+1),p);

  label("$t$",(T+eps1,0),S);
  label("$W_t$",(0,(xmax+eps1)),(-1,0));

  label("$\mathcal{R}$",(2,0.3),UnFill(0.5mm));

  //label("$D_{\Rt}$",(2.2,1.35),UnFill(0.5mm));
\end{asy}

  \caption{The Root solution to the SEP.}
  \label{fig:Root}
\end{figure}

For the informed agent we assume that
\[
\Lambda=\{r(\omega,t) : t\leq\overline{\tau}\},
\]
where the stopping time $\overline{\tau}$ is the hitting time of a regular barrier $\Bcal$ in the phase space $(t,W)$, i.e. a Root-type barrier:
\begin{equation}\label{eq:sr}
\overline{\tau}=\inf\{t\geq 0 : (t,W_t)\in\Bcal\}.
\end{equation}

As in Theorem~\ref{thm:exay}, we are able to determine whether
$\RST(\mu;\Lambda)$ is empty, and hence whether there is an arbitrage
for the informed agent, through properties of the barriers.

\begin{theorem}\label{thm:root}
  Let the set $\Lambda$ be given by \eqref{eq:1}, with
  $\overline{\tau}$ of the form in \eqref{eq:sr}. Then the set
  $\RST(\mu;\Lambda)$ is non-empty if and only if:
  \begin{equation}\label{eq:hb2}
    \Bcal \subseteq \Rcal,
  \end{equation}
  which yields $\tau_{Root}(\mu)\leq\overline{\tau}$. In particular, if
  \eqref{eq:hb2} holds, the stopping rule
  $\tau_{Root}(\mu)$ is admissible for the informed agent, in the sense that $((W_t)_{t\leq\tau_{Root}(\mu)},\tau_{Root}(\mu))\in\Lambda$ a.s.
\end{theorem}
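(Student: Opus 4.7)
I would prove the two implications of the biconditional separately.

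For the sufficient direction, assuming $\Bcal \subseteq \Rcal$, the larger closed set is hit earlier, so by monotonicity of hitting times $\tau_{Root}(\mu) = \inf\{t : (t, B_t) \in \Rcal\} \le \inf\{t : (t, B_t) \in \Bcal\} = \overline\tau$ almost surely. Because $\tau_{Root}(\mu)$ is a uniformly integrable Skorokhod embedding of $\mu$, the associated randomised stopping time lies in $\RST(\mu;\Lambda)$, which is therefore non-empty, and $\tau_{Root}(\mu)$ is admissible by construction.

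For the necessary direction, I mimic the role played by the monotonicity principle in the classical derivation of Root's embedding, now in the constrained setting. Assuming $\RST(\mu;\Lambda)\neq\emptyset$, I take the constrained SEP \eqref{eq:Wiener opt} with the payoff $\gamma(f,t)=-t^2$, which is \usc{} and non-positive, so that condition \eqref{eq:bdd above} is trivially satisfied and Theorem~\ref{thm:max exists} delivers an optimiser $\xi^*\in\RST(\mu;\Lambda)$. Substituting $\gamma(f,t)=-t^2$ into \eqref{eq:SGpairs}, a direct expansion reduces the stop-go inequality to $-2s\,\E[\sigma] < -2t\,\E[\sigma]$, i.e.\ to $s>t$. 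Consequently $((f,s),(g,t))\in\SG_\Lambda$ exactly when $(f,s)\in\Lambda$, $f(s)=g(t)$, $s>t$, and $(g,t)$ admits some non-trivial continuation that keeps the path in $\Lambda$.

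Theorem~\ref{thm:mp} places $\xi^*$ on a feasible $\gamma$-monotone set $\Gamma^*$; the emptiness of $\SG_\Lambda\cap(\Gamma^{*<}\times\Gamma^*)$ translates into the structural property that, at each spatial level $x$, every time at which $\Gamma^{*<}$ continues a path ending at $x$ is bounded above by every time at which $\Gamma^*$ stops an extendable path ending at $x$. This is exactly the defining geometry of a Root-type embedding, so the stopping rule of $\xi^*$ agrees $\xi^*$-almost surely with the hitting time of a regular barrier $\Rcal^*$ whose lower semi-continuous barrier function is defined from $\Gamma^*$ in the obvious way. Since $\tau_{\Rcal^*}$ is a uniformly integrable embedding of $\mu$ by a regular Root barrier, the uniqueness of the regular Root barrier for $\mu$ (cf.\ \cite{COT15}) forces $\Rcal^*=\Rcal$. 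Finally, the constraint $\xi^*\in\RST(\mu;\Lambda)$ yields $\tau_{\Rcal^*}\le\overline\tau$ a.s., which for regular barriers is equivalent to $\Bcal\subseteq\Rcal^*=\Rcal$.

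The principal obstacle is the passage from the combinatorial conclusion of the monotonicity principle to the assertion that $\xi^*$ is literally the hitting time of a regular Root barrier: this is the constrained counterpart of the derivation of Root's embedding from \cite{BCH14}, and one must handle the feasibility clause in Definition~\ref{def:SGpairs} with care, since it becomes vacuous on the topological boundary of $\Bcal$ but concerns only a $\xi^*$-null set. A secondary but necessary technical point is the equivalence, for regular barriers, between a.s.\ ordering of hitting times and set inclusion, which rests on the standard observation that Brownian motion reaches any open subset of the complement of a regular barrier with positive probability before hitting the barrier.
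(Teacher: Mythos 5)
Your sufficiency direction is exactly the paper's: $\Bcal\subseteq\Rcal$ forces $\tau_{Root}(\mu)\le\overline{\tau}$ pathwise, so $\tau_{Root}(\mu)$ itself witnesses $\RST(\mu;\Lambda)\neq\emptyset$. For necessity, however, you take a genuinely different route. The paper does \emph{not} use the monotonicity principle here: it fixes an arbitrary $\tau'\in\RST(\mu;\Lambda)$ and a point $(\hat t,\hat x)\in\Bcal\setminus\Rcal$ (with $\hat t$ minimal at level $\hat x$), and compares expected local times, using the fact from \cite{GaObRe15} that the Root embedding maximises $\E[L^{x}_{\tau\wedge t}]$ simultaneously in $(t,x)$ among minimal embeddings of $\mu$, together with the identity $\E[L^{x}_{\tau}]=-u_\mu(x)-|x|$ common to all $\tau\in\RST(\mu)$. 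Since $\tau'$ can accrue no further local time at $\hat x$ after $\hat t$ while the Root rule does (by regularity of $\Rcal$), one gets $\E[L^{\hat x}_{\tau'}]<\E[L^{\hat x}_{\tau_{Root}}]$, a contradiction. That argument needs neither an optimiser nor any topological assumption on $\Lambda$, which is precisely where your version runs into trouble.

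Concretely, your necessity argument has two gaps. First, with $\gamma(f,t)=-t^2$, well-posedness of \eqref{eq:Wiener opt} requires $\int t^2\,\xi(d\omega,dt)<\infty$ for some feasible $\xi$, which is not guaranteed when $\mu$ has only the finite second moment assumed throughout the paper; this is easily repaired by a bounded strictly convex payoff such as $\gamma(f,t)=-e^{-t}$, for which the same stop-go expansion again reduces to $s>t$. Second, and more seriously, Theorem \ref{thm:max exists} produces an optimiser only when $\Lambda$ is closed, and $\Lambda=\{r(\omega,t):t\le\overline{\tau}\}$ is in general \emph{not} closed in $(\Scal,d_\Scal)$: stopped paths staying strictly outside the closed barrier $\Bcal$ can converge to a path that touches $\Bcal$ strictly before its terminal time (already for $\Bcal=\R_+\times[1,\infty)$). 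Hence neither the compactness of $\RST(\mu;\Lambda)$ from Corollary \ref{cor:compact} nor the existence of $\xi^*$ is available, and your whole chain (monotonicity principle, barrier structure, Loynes uniqueness, then $\tau_{Root}(\mu)\le\overline{\tau}$) hangs on that missing optimiser. The remaining ingredients — the passage from feasible $\gamma$-monotonicity to a Root-type barrier, uniqueness of the regular barrier embedding $\mu$, and the equivalence for regular barriers between $\tau_{\Rcal}\le\tau_{\Bcal}$ a.s.\ and $\Bcal\subseteq\Rcal$ — are correct and argued at the level of rigour the paper itself adopts in Section \ref{sect:var}; note also that the feasibility clause in Definition \ref{def:SGpairs} is a universal quantification over continuations, so pairs with no feasible continuation are vacuously stop-go and do not need the null-set argument you sketch.
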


\begin{proof}
  We first observe that if \eqref{eq:hb2} holds, then we immediately
  have $\tau_{Root}(\mu)\leq\overline{\tau}$, and since $\tau_{Root}(\mu)\in
  \RST(\mu;\Lambda)$, then $\RST(\mu;\Lambda) \neq \emptyset$.  To show the reverse implication, suppose, for contradiction,
  that $\RST(\mu;\Lambda)$ is non-empty and $\Bcal \not\subseteq
  \Rcal$. This means that there exist pairs
  $(t,x)\in\Bcal\setminus\Rcal$. Among those pairs, we consider a
  fixed $(\hat t, \hat x)$ such that there are no
  $(t,\hat x)\in\Bcal\setminus\Rcal$ with $t<\hat t$, as in Figure~\ref{fig:root2}.
  \begin{figure}[ht]
    \centering
    \begin{asy}[width=0.7\textwidth]
      import graph;
      import stats;
      import patterns;

      // import settings;

      // gsOptions="-P"; 

      // Construct a Brownian motion of time length T, with N time-steps
      int N = 200;
      //int N = 300;
      real T = 1.6*1.6;
      real dt = T/N;
      real B0 = 0;

      real sig = 0.7;

      real xmax = 0.8;
      real xmin = -0.8;

      real tmax = (xmax-xmin)*1.6;

      real[] B; // Brownian motion
      real[] t; // Time

      path BM;

      // Seed the random number generator. Delete for a "random" path:
      // srand(130);
      srand(137);

      B[0] = B0;
      t[0] = 0;

      BM = (t[0],B[0]);

      // Define a barrier

      real R(real y) {return 1.85 + ((y-1)**4)/8-(y**2)/1.5 -(y**6)*4;}
      real R2(real y) {return 1.55 + ((y+1)**4)/6+(y**2)/1.5 -(y**7)*4-(max(y,0)**4)*5;}

      int H = N+1;
      int H2 = N+1;
      int BMstop;
      int BMstop2 = N+2;

      for (int i=1; i<N+1; ++i)
      {
        B[i] = B[i-1] + sig*Gaussrand()*sqrt(dt);
        t[i] = i*dt;

        if ((H==N+1)&&(t[i]>=R2(B[i])))
        {
          H = i;
          BMstop = length(BM);
          BM = BM--(R2(B[i]),B[i]);
        }
        else
        {
          BM = BM--(t[i],B[i]);
        }

      }

      if (H==N+1)
      BMstop = length(BM);

      pen p = palegreen + 1.5;
      pen p2 = lightgray + 0.25;
      //pen p2 = mediumgray + 1;

      //if (H<N+1)
      //draw(subpath(BM,BMstop,BMstop2),p2);

      pair tau = point(BM,BMstop+1);

      pen q = black + 0.5;

      real eps1 = 0.05;
      real eps2 = 0.15;

      // path barrier = (graph(R,identity,xmin+eps1,xmax-eps1)--(tmax-eps2,xmax-eps1)--(tmax-eps2,xmin+eps1)--cycle);

      // add("hatch",hatch(1mm,W,mediumgray));

      // fill(barrier,pattern("hatch"));

      draw((0,xmin)--(0,xmax+eps1),q,Arrow);
      draw((0,0)--((T+eps1),0),q,Arrow);

      // draw((tau.x,0)--tau,mediumgray+dashed);
      // label("$\tau_{Root}(\mu)$",(tau.x,0),S);

      draw(subpath(BM,0,BMstop+1),p);

      draw(graph(R,identity,xmin+eps1,xmax-eps1),NW,deepblue+1.5);
      draw(graph(R2,identity,xmin+eps1,xmax-eps1),NW,heavymagenta+1.5);

      label("$t$",(T+eps1,0),S);
      label("$W_t$",(0,(xmax+eps1)),(-1,0));

      label("$\mathcal{R}$",(R(0.4),0.45),E,deepblue);
      label("$\mathcal{B}$",(R2(-0.5)+0.05,-0.5),E,heavymagenta);
      
      label("$(\hat{t},\hat{x})$",tau,E);
      dot(tau);

      //label("$D_{\Rt}$",(2.2,1.35),UnFill(0.5mm));
    \end{asy}

    \caption{Proof of Theorem~\ref{thm:root}.}
    \label{fig:root2}
  \end{figure}

Now consider $\tau'\in\RST(\mu;\Lambda)$. Denote the local time of Brownian motion in $z$ by $L^z$. Since the Root embedding maximises $\E\left[L^x_{\tau\wedge t}\right]$ among all stopping times $\tau$ which are minimal embeddings of $\mu$ (cf.\ \eqref{eq:minimal}), simultaneously for all
$(t,x)\in\R_+\times\R$ (\eg{} by \cite[Theorem~3]{GaObRe15}), then in  particular
\[
\E\left[L^{\hat x}_{\tau'\wedge \hat t}\right]\leq \E\left[L^{\hat x}_{\tau_{Root}(\mu)\wedge \hat t}\right].
\]
On the other hand, the path stopped at $\tau'$ cannot accumulate any
more local time at $\hat x$ after $\hat t$, \ie{} $\E\left[L^{\hat
    x}_{\tau'\wedge \hat t}\right] = \E\left[L^{\hat x}_{\tau'\wedge
    t}\right]$ for all $t \ge \hat{t}$, while the Root stopping rule
will do so ($\E\left[L^{\hat x}_{\tau_{Root}(\mu)\wedge \hat t}\right] <
\E\left[L^{\hat x}_{\tau_{Root}(\mu)\wedge t}\right]$ when $t>\hat{t}$), because the barrier is assumed to be regular. Therefore,
\[
\E\left[L^{\hat x}_{\tau'}\right]=\E\left[L^{\hat x}_{\tau'\wedge \hat t}\right]\leq \E\left[L^{\hat x}_{\tau_{Root}(\mu)\wedge \hat t}\right]<\E\left[L^{\hat x}_{\tau_{Root}(\mu)}\right].
\]
This gives the desired contradiction, since, for any $x\in\R$ and any stopping time $\tau\in\RST(\mu)$,
\[
\E\left[L^x_\tau\right]=\E\left[|W_\tau-x|\right]-|x|=-u_\mu(x)-|x|,
\]
where $u_\mu$ is the potential function associated to $\mu$, i.e., $u_\mu(x)=-\int|y-x|\mu(dy)$.
\end{proof}
\subsubsection{The Az\'ema-Yor phase space}\label{sect:AY}
We let $\ol{\omega}_t:=\sup_{0\leq s\leq t} \omega_s$, and define the process $\ol{W}$ analogously. We start by recalling the Az\'ema-Yor solution of the  (unconstrained) Skorokhod embedding problem (SEP). The barycenter function $b_\mu$ of a probability measure $\mu$ is defined by
\begin{equation*}
b_{\mu}(x) := \frac{\int_{[x,\infty)} y \, \mu(\di y)}{\mu([x,\infty))}.
\end{equation*}
Denote the inverse of $b_\mu$ by $\beta_\mu$. The solution to the SEP by Az\'ema and Yor, see \cite{AY79}, is given by
\[
\tau_{AY}(\mu)=\inf\{t\geq 0 : W_t\leq \beta_\mu(\ol{W}_t)\}.
\]
This is arguably the most renowned solution to SEP, for which many properties are known, among which, that it maximises stochastically the maximum of the stopped Brownian motion. See the survey article of \citet{Ob04} for further details.
We illustrate the Az\'ema-Yor solution in Figure~\ref{fig:AY}.
\begin{figure}
\centering
  \begin{asy}[width=0.6\textwidth]
    import graph;
    import stats;
    import patterns;

    // Construct a Brownian motion of time length T, with N time-steps
    int N = 3000;
    real T = 1.25;
    real dt = T/N;
    real B0 = 0;

    real[] B; // Brownian motion
    real[] t; // Time
    real[] M; // Maximum

    path BM;
    path BMM;

    real xmax = 1.0;
    real xmin = -0.15;
    real ymax = xmax;

    // Seed the random number generator. Delete for a "random" path:
    srand(94);

    B[0] = B0;
    t[0] = 0;
    M[0] = B[0];
    
    BM = (t[0],B[0]);
    BMM = (M[0],B[0]);

    //real psiinv(real y) {return y-((y-2)**2)*exp(-y/4)/4-1/2;}
    //real psiinv(real y) {return max(xmin+0.005+exp(-0.5/y)*(-exp(-(y-0.7)**3)+y*1.1+0.5-xmin),xmin);}
    //real psiinv(real y) {return max(xmin+0.005+exp(-0.5/y)*(-exp(-(y-0.7)**3)+y*1.1+0.5-xmin),xmin);}
    real psiinv(real y) {return xmin + 2*y/3 + sin(-8*y)/16;}

    real x0 = 0.01;
    real x1 = xmax-0.025;

    real eps3 = 0.275;
    
    int H = N+1;
    int BMMstop;
    
    for (int i=1; i<N+1; ++i)
    {
      B[i] = B[i-1] + Gaussrand()*sqrt(dt);
      t[i] = i*dt;
      M[i] = (B[i] > M[i-1])? B[i] : M[i-1];
      BM = BM--(t[i],B[i]);
      if (M[i-1] < M[i])
      BMM = BMM--(M[i-1],M[i-1])--(M[i],B[i]);
      else
      {
        if ((H==N+1)&&(B[i]<psiinv(M[i])||(M[i] >= ymax-eps3)))
        {
          H = i;
          BMMstop = length(BMM)+1;
          BMM = BMM--(M[i],psiinv(M[i]));
        }
        else
        {
          BMM = BMM--(M[i],B[i]);
        }
      }
    }
    
    if (H==N+1)
    BMMstop = length(BMM);

    pen p = deepgreen + 1.5;
    pen p2 = lightgray + 0.25;
    
    //if (H<N+1)
    //draw(subpath(BMM,BMMstop,length(BMM)),p2);

    pair tau = point(BMM,BMMstop);

    //draw((B0,psiinv(tau.y))--(tau.y,psiinv(tau.y)),p2+dashed);
    
    pen q = black + 0.5;

    real eps1 = 0.05;

    real eps2 = 0.15;

    path barrier = (graph(identity,psiinv,x0,x1)--(max(ymax-eps2,x1),psiinv(x1))--(max(ymax-eps2,x1),psiinv(x0))--cycle);

    add("hatch",hatch(1mm,W,mediumgray));
    
    fill(barrier,pattern("hatch"));

    //Label L = Label("$\Psi_{\mu}(B_t^0)$",UnFill);

    draw(graph(identity,psiinv,x0,x1),NW,deepblue+0.5);

    draw((B0,B0)--(xmax-eps1,xmax-eps1));

    draw(subpath(BMM,1,BMMstop),p);
    draw((0,xmin)--(0,xmax),q,Arrow);
    draw((0,0)--((ymax+eps1),0),q,Arrow);
    label("$W_t$",(0,xmax),(-1,0));
    //label("$\sup_{s \le t} W_t$",(0,(xmax+eps1)),(0,1));
    label("$\ol{W}_t$",((ymax+eps1),0),S);
    
    draw(Label("$\tau_{AY}$",UnFill(0.25mm)),(tau.x+0.04,tau.y-0.04));   
  \end{asy}
  \caption{The Az\'ema-Yor construction.}
\label{fig:AY}
\end{figure}

For the informed agent, we assume that
\begin{equation}\label{eq:1}
\Lambda=\{r(\omega,t) : t\leq\overline{\tau}\},
\end{equation}
where the stopping time $\overline{\tau}$ is the hitting time of a barrier in the phase space $(\ol{W},W)$:
\begin{equation*}
  \overline{\tau}=\inf\{t\geq 0 : (\ol{W}_t,W_t)\in\Hcal\},
\end{equation*}
where $\Hcal$ is a Borel set $\Hcal\subseteq \{(x,y)\in\R_+\times\R : y\leq x\}$ induced by some increasing left-continuous Borel function $h:\R_+\to\R$ via
$$\Hcal = \{(x,y) :  y \leq h(x) \},$$
so that $(x,y)\in\Hcal$ and $z>x$ imply $(z,y)\in\Hcal$.
Note that this  gives
\begin{equation}\label{eq:say}
\overline{\tau}=\inf\{t\geq 0 : W_t\leq h(\ol{W}_t)\},
\end{equation}
 thus the set $\Lambda$ in \eqref{eq:1}
corresponds to the following set of feasible paths for the informed agent:
\begin{equation}\label{eq:a1}
\mathcal{A}=\{\omega\in\Ccal_{s_0}[0,T] : \omega_t>h(\ol{\omega}_t)\; \forall t\in[0,T)\},
\end{equation}
that is, the paths that satisfy the drawdown constraint  $\omega>h(\ol{\omega})$ during the period $[0,T)$.

We now give a result which shows that, when the agent's information is
given by $\mathcal{A}$ as in \eqref{eq:a1}, then we can provide a simple necessary and sufficient
condition for the existence of consistent models for the informed agent, cf. \ref{item:1_RST} of
Theorem~\ref{thm:RSTExists}. If there is no ambiguity we write $\beta_\mu=\beta$ in the following.

\begin{theorem}\label{thm:exay}
  Let the set $\Lambda$ be given by \eqref{eq:1}, with
  $\overline{\tau}$ of the form in \eqref{eq:say}. Then the set
  $\RST(\mu;\Lambda)$ is non-empty if and only if:
  \begin{equation}\label{eq:hb}
    h(x)\leq\beta(x) \quad \textrm{for all $x\in\R_+$},
  \end{equation}
  which yields $\tau_{AY}(\mu)\leq\overline{\tau}$. In particular, if \eqref{eq:hb}
  holds, the stopping rule $\tau_{AY}(\mu)$ is admissible for the
  informed agent, in the sense that $((W_t)_{t\leq\tau_{AY}(\mu)},\tau_{AY}(\mu))\in\Lambda$ a.s.
\end{theorem}

\begin{proof} 
  We first observe that if \eqref{eq:hb} holds, then we immediately
  have $\tau_{AY}(\mu) \le \overline{\tau}$, and since $\tau_{AY}(\mu) \in
  \RST(\mu;\Lambda)$, then $\RST(\mu;\Lambda) \neq \emptyset$.

  For the reverse implication, we suppose that there exists $\hat
  x\in\R_+$ such that $h(\hat x)>\beta(\hat x)$, as in Figure~\ref{fig:pfAY}.
\begin{figure}
\centering
  \begin{asy}[width=0.6\textwidth]
    import graph;
    import stats;
    import patterns;

    // Construct a Brownian motion of time length T, with N time-steps
    int N = 3000;
    real T = 1.25;
    real dt = T/N;
    real B0 = 0;

    real[] B; // Brownian motion
    real[] t; // Time
    real[] M; // Maximum

    path BM;
    path BMM;

    real xmax = 1;
    real xmin = -0.15;
    real ymax = xmax;

    // Seed the random number generator. Delete for a "random" path:
    srand(94);

    B[0] = B0;
    t[0] = 0;
    M[0] = B[0];
    
    BM = (t[0],B[0]);
    BMM = (M[0],B[0]);

    //real psiinv(real y) {return y-((y-2)**2)*exp(-y/4)/4-1/2;}
    //real psiinv(real y) {return max(xmin+0.005+exp(-0.5/y)*(-exp(-(y-0.7)**3)+y*1.1+0.5-xmin),xmin);}
    //real psiinv(real y) {return max(xmin+0.005+exp(-0.5/y)*(-exp(-(y-0.7)**3)+y*1.1+0.5-xmin),xmin);}
    real psiinv(real y) {return xmin + 2*y/3 + sin(-8*y)/16;}
    real psiinv2(real y) {return xmin + 3*y/4 + cos(-8*y)/16-1/20;}

    real x0 = 0.0;
    real x1 = xmax-0.025;

    real eps3 = 0.275;
    
    int H = N+1;
    int BMMstop;
    
    for (int i=1; i<N+1; ++i)
    {
      B[i] = B[i-1] + Gaussrand()*sqrt(dt);
      t[i] = i*dt;
      M[i] = (B[i] > M[i-1])? B[i] : M[i-1];
      BM = BM--(t[i],B[i]);
      if (M[i-1] < M[i])
      BMM = BMM--(M[i-1],M[i-1])--(M[i],B[i]);
      else
      {
        if ((H==N+1)&&(B[i]<max(psiinv(M[i]),psiinv2(M[i]))||(M[i] >= ymax-eps3)))
        {
          H = i;
          BMMstop = length(BMM)+1;
          BMM = BMM--(M[i],max(psiinv(M[i]),psiinv2(M[i])));
        }
        else
        {
          BMM = BMM--(M[i],B[i]);
        }
      }
    }
    
    if (H==N+1)
    BMMstop = length(BMM);

    pen p = palegreen + 0.5;
    pen p2 = lightgray + 0.25;
    
    //if (H<N+1)
    //draw(subpath(BMM,BMMstop,length(BMM)),p2);

    pair tau = point(BMM,BMMstop);

    //draw((B0,psiinv(tau.y))--(tau.y,psiinv(tau.y)),p2+dashed);
    
    pen q = black + 0.5;

    real eps1 = 0.05;

    real eps2 = 0.15;

    //path barrier = (graph(identity,psiinv,x0,x1)--(max(ymax-eps2,x1),psiinv(x1))--(max(ymax-eps2,x1),psiinv(x0))--cycle);

    //add("hatch",hatch(1mm,W,mediumgray));
    
    //fill(barrier,pattern("hatch"));

    //Label L = Label("$\Psi_{\mu}(B_t^0)$",UnFill);

    draw(graph(identity,psiinv,x0,x1),NW,deepblue+1.0);
    label("$h(\ol{W}_t)$",(x1,psiinv(x1)),E,deepblue+1.0);
    draw(graph(identity,psiinv2,x0,x1),NW,heavymagenta+1.0);
    label("$\beta(\ol{W}_t)$",(x1,psiinv2(x1)),E,heavymagenta+1.0);

    draw((B0,B0)--(xmax-eps1,xmax-eps1));

    real xh = 0.5;
    draw((xh,0)--(xh,psiinv(xh)),black+1.0+dashed);
    label("$\hat{x}$",(xh,0),S);

    draw(subpath(BMM,1,BMMstop),p);
    draw((0,xmin)--(0,xmax),q,Arrow);
    draw((0,0)--((ymax+eps1),0),q,Arrow);
    label("$W_t$",(0,xmax),(-1,0));
    //label("$\sup_{s \le t} B_t$",(0,(xmax+eps1)),(0,1));
    label("$\ol{W}_t$",((ymax+eps1),0),S);
    
    //draw(Label("$\tau_{AY}$",UnFill(0.25mm)),(tau.x+0.04,tau.y-0.04));   
  \end{asy}
  \caption{Proof of Theorem~\ref{thm:exay}}
  \label{fig:pfAY}
\end{figure}
Then we fix $\tau'\in \RST(\mu;\Lambda)$, and argue as follows. Define a measure
\begin{equation*}
  \eta(A):= \P(W_{\tau'} \in A, \ol{W}_{\tau'} \ge \hat{x})
\end{equation*}
and note that, by the martingale property, $\int y \,\eta(\di y) =
\hat{x} \cdot \eta(\R)$. Moreover, $\eta(A \cap [\hat{x},\infty)) = \mu(A
\cap [\hat{x},\infty))$, and $\eta(A) \le \mu(A)$ for all Borel sets
$A$.

Define functions $\Phi_{\eta}, \Phi_\mu:(-\infty,\hat{x}] \to \R$ by:
\begin{equation*}
  \Phi_\eta(x) = \int_{[x,\infty)} y \, \eta(\di y) - \hat{x} \cdot
  \eta([x,\infty)) = \int_{[x,\infty)} (y-\hat{x}) \, \eta(\di y),
\end{equation*}
and similarly for $\mu$. Then $\Phi_\mu, \Phi_\eta$ are both
increasing on $(-\infty, \hat{x}]$, $\Phi_\mu(\hat{x}) =
\Phi_\eta(\hat{x})$, and $\Phi_\mu(x) - \Phi_\eta(x)$ is increasing in
$x$ for $x \in (-\infty,\hat{x}]$ since $\mu(\di y) \ge \eta(\di
y)$. Hence we deduce that $\Phi_\mu(x) \le \Phi_\eta(x)$ for $x \le
\hat{x}$.

Now we observe that 
 $\eta((-\infty,h(\hat{x}))) = 0$, so
$\Phi_\eta(h(\hat{x})) = 0$. On the other hand, by the definition of the
barycentre function,
\begin{equation*}
  \beta(\hat{x}) := \sup\left\{y <\hat{x}: \Phi_\mu(y) \le 0\right\}.
\end{equation*}
It follows from $\Phi_\mu(x) \le \Phi_\eta(x)$ that $h(\hat{x}) \le
\beta(\hat{x})$, contradicting our original assumption.
\end{proof}

Let us consider the drawdown constraint in \eqref{eq:34} which corresponds to $\Lambda=\lc 0,\overline{\tau}\rc$, with $\overline{\tau}$ given as in \eqref{eq:say} for $h(x)=x-c$. In this case Theorem~\ref{thm:exay} implies that $x-c\leq\beta(x)=b_\mu^{-1}(x)$ must hold in order to have a feasible solution for the informed agent. This condition can of course be rephrased in terms of barycentre functions since $h$ is the inverse
of the barycenter function associated to $\overline \mu\sim W_{\overline{\tau}}$ (by  \cite{AY79}).
Therefore, the existence of a consistent solution for the insider is equivalent to $b_{\overline \mu}^{-1}\leq b_\mu^{-1}$, that is,
$b_\mu\leq b_{\overline \mu}$.

Theorem~\ref{thm:exay} tells us when the pricing problem for the insider, cf. \eqref{eq:insider} or \eqref{eq:P*f}, has a feasible solution, and hence an optimiser under the conditions of Theorem~\ref{thm:max exists}, but does not tell us anything about the specific optimiser. On the other hand, the constrained monotonicity principle, Theorem \ref{thm:mp}, allows us to characterise the geometry of optimisers in various settings. We illustrate this in the special situation where the agent wishes to find the stopping times $\tau$ solving \eqref{eq:Wiener opt} in the case $\gamma(f,s)=-s^2$ corresponding to the payoff $F(S)=-\langle S\rangle_T^2.$ We are interested in characterising solutions to \begin{align}\label{eq:constRoot}
 \min_{\tau\in\RST(\mu;\Lambda)} \E[\tau^2],
\end{align}
where
\begin{equation}\label{eq.consh}
\Lambda=\{r(\omega,t):t\leq \overline\tau\},\;\; \text{with}\;\; \overline\tau=\inf\{t\geq 0 : W_t\leq h(\overline W_t)\},
\end{equation}
for a step function $h(x)=\sum_{i=1}^n a_i 1_{[m_{i-1},m_i)}(x)$ with $m_0=s_0 <m_1<\ldots<m_n$ and $a_1\leq a_2\leq\ldots\leq a_n$.
We set $\tilde\Lambda=\{r(\omega,t): t <\overline\tau\}$, and note that
 $(f,s)\in\tilde\Lambda$ and $m_{i-1}\leq \ol{f}_s <m_i$ imply that $f(s)>a_i$.

We recall that in the unconstrained case, i.e.\ $h(x)=-\infty$, the solution is the Root solution $\tau_\mathsf{Root}$, the first hitting time of a barrier in space-time (see also Section \ref{sect:Root}).

In the current setup, the situation is similar:

\begin{theorem}\label{thm:constRoot}
 Assume that $\RST(\mu;\Lambda)\neq \emptyset$ and that the optimisation problem \eqref{eq:constRoot} is well posed. Then for any optimiser $\hat\tau$ there exists a sequence of barriers
  $(\mathcal R_i)_{i=1}^n$ such that 
 $$\hat\tau= \inf\{t\geq 0: (t,W_t)\in \mathcal R_{\ell(\ol{W}_t)}\},$$
 where $\ell(m)=\sum_{i=1}^n i1_{[m_{i-1},m_i)}(m).$ 
 
Moreover, for each $j\leq i$ it holds that
$$ \left(\mathcal R_j\cap [0,\infty)\times (a_i,\infty)\right) \subset \left(\mathcal R_i\cap [0,\infty)\times (a_i,\infty) \right).$$
\end{theorem}

\begin{proof}
  To avoid too many minus signs, we redefine $\gamma(f,s)=s^2$ and for this proof we consider the minimisation variant of \eqref{eq:Wiener opt}.

  By Theorem \ref{thm:max exists} we can find a minimiser, say $\hat\tau$, to the optimisation problem \eqref{eq:constRoot}. By Theorem \ref{thm:mp} we can pick a feasible $\gamma$-monotone set $\Gamma$ such that $\hat\tau(r^{-1}(\Gamma))=1$ and $\SG_\Lambda\cap (\Gamma^<\times \Gamma)=\emptyset$.

 We claim that
\begin{align}\label{eq:SGconstRoot}  
\SG_\Lambda \supset \{((f,s),(g,t))\in\tilde\Lambda\times\Lambda:f(s)=g(t), s>t, \ol{f}_s \geq \ol{g}_t \}.
\end{align}
Indeed, pick $(f,s),(g,t)\in\Lambda$ with $s>t$ and $f(s)=g(t)$.  It holds for any $(k,u)\in \Scal$ by convexity of $s\mapsto s^2$ (and since we are considering minimisation instead of maximisation) that
$$ \gamma(f,s)+\gamma(g\oplus k, t+u)<\gamma(f\oplus k,s+u)+\gamma(g,t),$$
so that \eqref{eq:SGpairs} follows by two observations. First, $(f,s)\in \tilde\Lambda$ implies the existence of at least one Brownian stopping time $\sigma$ with $0<\E[\sigma]<\infty$ such that $1_\Lambda(f\oplus W,s+\sigma)=1$, e.g.\ if $m_{i-1}\leq \ol{f}_s<m_i$ then the first hitting time of Brownian motion of $\{\frac12 (f(s)-a_i), \frac12 (m_i-f(s)\}$ is such a stopping time.
Second, any stopping time $\sigma$ with $1_\Lambda(f\oplus  W,s+\sigma)=1$ necessarily satisfies $1_\Lambda(g\oplus W,t+\sigma)=1$, since $f(s)=g(t), \ol{f}_s\geq \ol{g}_t$, and the function $h$ defining $\Lambda$ is increasing.
 
 Put $\Gamma_i=\Gamma\cap \{(f,s)\in\Scal: m_{i-1}\leq \ol{f}_s<m_i\}$ and set
 \begin{align*} \mathcal R_i^\mathsf{op}:=\{(s,x): \exists (g,t)\in\Gamma_i, g(t)=x, s>t\},\\
  \mathcal R_i^\mathsf{cl}:=\{(s,x): \exists (g,t)\in\Gamma_i, g(t)=x, s\geq t\}.
 \end{align*}
Pick $(g,t)\in\Gamma_i$. Then we claim that 
$$ \inf\{s\in[0,t]: (s,g(s))\in \mathcal R_i^\mathsf{cl}\}\leq t \leq \inf\{s\in[0,t]: (s,g(s))\in \mathcal R_i^\mathsf{op}\}.$$
Since the first inequality holds by construction, suppose for contradiction that $\inf\{s\in[0,t]: (s,g(s))\in \mathcal R_i^\mathsf{op}\}<t$. In this case, there is $s<t$ such that $(g|_{[0,s]},s)=:(f,s)\in\Gamma_i^<$, $(s,f(s))\in\mathcal R_i^\mathsf{op}$ and since $s<t$ it holds that $f(s)>a_i$. Then there exists $(k,u)\in \Gamma_i$ such that $u<s$ and $k(u)=f(s)>a_i$ so that $(k,u)\in\tilde \Lambda$. However, by \eqref{eq:SGconstRoot}, this means that $((f,s),(k,u)) \in \SG_\Lambda\cap (\Gamma^<\times \Gamma)$, which cannot be the case.

Pick $\omega$ such that $(W(\omega)_{0\leq t\leq\hat\tau(\omega)},\hat\tau(\omega))\in\Gamma_i$ for some $1\leq i\leq n$. It then follows that
\begin{align*}
 \tau_i^{\mathsf{cl}}(\omega):=\inf\{t\geq 0: (t, W_t(\omega))\in \mathcal R_i^\mathsf{cl}\}\leq \hat\tau(\omega) \leq \inf\{t\geq 0: (t, W_t(\omega))\in \mathcal R_i^\mathsf{op}\}=:\tau_i^\mathsf{op}(\omega).
\end{align*}
Then, we can conclude the existence of the barriers $(\mathcal R_i)_{i=1}^n$ by the observation that conditionally on the event $\{m_{i-1}\leq  \overline{W}_{\hat\tau}<m_i \}$ it holds $\tau_i^\mathsf{cl}=\tau_i^\mathsf{op}$ a.s.\ by the strong Markov property and the fact that Brownian motion almost surely immediately returns to its starting point.

To show the final claim, note that \eqref{eq:SGconstRoot} implies that at each $x\notin\{m_1,\ldots,m_n\}$ the condition $(g,t)\in\Gamma\cap\Lambda$ with $g(t)=x$ implies $g(t)\in\tilde\Lambda$. Just as in the first part of the proof, it then follows that there is no $(f,s)\in\Gamma^<$ with $f(s)=x, \ol{f}_s\geq \ol{g}_t$ and $s>t$. This gives the result.
 \end{proof}

\begin{example}
 Consider the case of $\mu=\frac13(\delta_{s_0-1} + \delta_{s_0} + \delta_{s_0+1})$, when $s_0>1$.  Let $h$ be given by the inverse barycentre function of $\mu$, i.e.\ $h=b_\mu^{-1}$ which equals
 $$h(x)= (s_0-1) \cdot 1_{[s_0-1,s_0+1/2)}(x) + s_0 \cdot 1_{[s_0+1/2,s_0+1)} + (s_0+1) \cdot 1_{[s_0+1,\infty)}.$$
 In particular, in the Root-type optimisation problem \eqref{eq:constRoot} constrained by $h$  as in \eqref{eq.consh}, any path which reaches level $s_0+1/2$ will not be stopped at $s_0-1$.
 
 The unconstrained Root solution instead is given by the hitting time of 
 $$\mathcal R=\{(t,s_0-1):t\in [0,\infty)\}\cup \{(t,s_0):t\in [a,\infty)\} \cup \{(t,s_0+1):t\in [0,\infty)\}, $$
 for some $a>0$. In particular, there are paths getting arbitrary close to one but which are stopped at $s_0-1$ so that the constrained Root solution is different from the unconstrained one.
 
 Also note that this is not related to the special case of $\mu$ begin atomic. Indeed, keep the same $h$. Consider $\tilde\mu$ to be the uniform measure on $[s_0-1,s_0+1]$ whose inverse barycentre function is given by $b_{\tilde\mu}^{-1}(m)=2m-s_0-1$ so that $\RST(\tilde\mu; \Lambda)\neq \emptyset$ by Theorem~\ref{thm:exay}. By the same reasoning as before, it is immediate to see that the Root solution is different from the constrained Root solution. 
\end{example}

\begin{remark}
 Let us consider \eqref{eq:constRoot} for the case of a general increasing function $h$ yielding a corresponding set of feasible paths $\Lambda$  as in \eqref{eq.consh}. Assume $\RST(\mu;\Lambda)\neq \emptyset.$ Approximate $h$ from below by step functions $h^n$ with corresponding sets of feasible paths $\Lambda^n$ and the property that $h^n\leq h^{n+1}$. Since then $\Lambda^n \supseteq \Lambda^{n+1}\supseteq\Lambda$, it follows that $\RST(\mu;\Lambda^n)\neq\emptyset$. For each $n$, pick by Theorem \ref{thm:max exists} an optimiser to (the corresponding version of) \eqref{eq:constRoot}, say $\hat\tau^n$. Since $\RST(\mu;\Lambda^n)\supseteq \RST(\mu;\Lambda)$, it follows that for all $n$
 $$\E[(\hat\tau^n)^2]\leq \inf_{\tau\in\RST(\mu;\Lambda)}\E[\tau^2].$$
 Since $\RST(\mu;\Lambda^1)$ is compact and $\hat\tau^n\in\RST(\mu;\Lambda^1)$ for all $n$, there is a converging subsequence and any limit point $\hat\tau$ must lie in $\RST(\mu;\Lambda).$ Moreover, any limit point $\hat\tau$ must be an optimiser by monotonicity, since
 $$\E\left[(\hat\tau^n)^2\right] \leq\E\left[(\hat\tau^{n+1})^2\right]\leq\inf_{\tau\in\RST(\mu;\Lambda)}\E[\tau^2].$$
 Since, by Theorem \ref{thm:constRoot}, each $\hat\tau^n$ is given as the hitting time of  barriers in space-time indexed by the running maximum, it is then plausible to conjecture that this remains true for $\hat\tau$ as well. To make this argument rigorous seems to be outside the scope of this article, however we note that \eqref{eq:SGconstRoot} still holds in the limit.
\end{remark}

\begin{remark}
 Considering in \eqref{eq:constRoot} a maximisation problem instead of a minimisation problem, the corresponding version of  \eqref{eq:SGconstRoot} turns into
 $$ \SG_\Lambda \supset \{((f,s),(g,t))\in\tilde\Lambda\times\Lambda:f(s)=g(t), s<t, \ol{f}_s \leq \ol{g}_t \}.
$$
Following the line of reasoning of Theorem \ref{thm:constRoot}, one can show that the optimal stopping time will be the hitting of a sequence of \emph{inverse barriers}
indexed by the running maximum, i.e.\ the corresponding version of  constrained Rost solutions.
Using similar ideas one can identify the optimal solutions and worst case scenarios in various different setups.
\end{remark}

\subsection{Option pricing in the presence of insider information: Variance options}\label{sect:var}

In this section, we consider the impact on the insider's pricing
bounds which come from additional information. Specifically, we
suppose that the information is on the drawdown, in a similar manner to the previous discussion, for example, as in \eqref{eq.consh}, and we look to find bounds on the prices of options on variance: that is, we consider the motivating example from the introduction, where we think of a trader who believes that the CEO of the company is attempting to satisfy a drawdown constraint, and wishes to understand the impact on pricing bounds of variance options on the same company.

To understand the structure of the derivatives, we consider an asset which follows a model of the form: $\di S_t = S_t \sigma_t \, \di W_t$, where $S_t$ is the discounted asset price, and $W_t$ a Brownian motion. The process $\sigma_t$ is the volatility, and $\int_0^t \sigma^2_r \, \di r$ is known as the \emph{integrated
  variance}. A variance option is then a contract which pays the
holder $G\left(\int_0^t \sigma^2_r \, \di r\right)$. The most common
example is the variance call, where $G(v) = (v-K)_+$. Note that the
integrated variance process can be determined as $\langle \ln
S\rangle_t$, the quadratic variation of the logarithm of the asset
price. For further details, we refer the reader to
\cite{CarrLee:10,Lee:2010ab,CoxWang:11,CoWa13}.

The standard method for pricing such options is to time-change the
process $S_t$ by a time change $\tau_t$ such that $X_t := S_{\tau_t}$
is a \emph{geometric} Brownian motion. With this time change, $(X_t,t)
= (S_{\tau_t},\langle \ln S\rangle_{\tau_t})$, that
is, the time-scale in the transformed picture corresponds to the
integrated variance process. In particular, the problem of finding a
model $S_t$ which minimises $\E[G(\langle\ln S\rangle_T)]$ subject to
$S_T \sim \mu$ is equivalent to finding a stopping time $\tau$ for $X$
to minimise $\E[G(\tau)]$ subject to $X_\tau \sim \mu$.

We would therefore like to compare the minimal (model-independent) price of the variance option for the insider, to that for the uninformed agent. To keep things simple, we consider an option which pays the holder the square-root of the \emph{arithmetic variance}, $V_t := \int_0^t S_r^2 \sigma_r^2 \, \di r$, which corresponds to choosing a time-change $\tau_t$ so that $X_t = S_{\tau_t}$ is a Brownian motion. This places us trivially in the setup of the rest of this paper.

Our problem of interest now may be posed as follows: consider an agent who has inside information on the future evolution of the asset, specifically, who knows that the price will never drop below $h(\ol{S}_t)$, where $h$ is an increasing function.
The agent plans to exploit this information by trading in derivatives written on the asset, and do not have strong modelling beliefs, so wish to profit from their information under any potential model. Suppose variance options with payoff $\sqrt{V_T}$ are liquidly traded. To profit, they plan to sell the derivative and setup a model-independent super-hedging strategy. They want to know at what price-level they are guaranteed to make a profit. If the agent also knows the feasibility set $\Lambda$ given by \eqref{eq.consh}, then their problem becomes to find \begin{align}
  & 
    \sup\{\E_\P[\sqrt{\tau}] : W_\tau\sim\mu,\, 
    W_{.\wedge\tau}\ \textrm{is u.i.},\, \tau \le \overline{\tau} \text{ a.s.}
    \}. \label{eq:PRootdefn}
\end{align}
By Theorem~\ref{thm:superhedge}, if we can identify the solution to this
problem, then there exists a corresponding super-hedging
strategy. However, it follows from Theorem~\ref{thm:constRoot} that the solution must be a nested sequence of barriers, which depend on the running maximum. To see how these barriers, and more specifically, the price bound, may depend on the information set, we consider the problem numerically under some additional structural examples.

\subsubsection{Numerical results}\label{sec:numerics}

In this section we illustrate the previous example with some numerical
evidence. In particular, we are interested in illustrating how the
insider's price changes as the information set
changes. 

Our basic setup is as follows: we suppose that the insider's
information set $\Lambda$ is determined by \eqref{eq.consh}, where the function $h$ is of the form: $h(x) = a_1 \1_{[m_0,\infty)}$, that is, there is a single step in the constraint, which comes in at the point where the maximum first exceeds the level $m_0$. In the examples, we will consider the case where the information set changes by varying $m_0$. Moreover, we will assume that the measure to be embedded consists of 4 atoms, at points $\{x_0, x_1, x_2, x_3\}$, and we have $a_1 = x_1$. It follows that the main issue to be determined is the value of the barrier at the level $x_1$ when the level $m_0$ has not yet been reached, and the barrier at $x_2$ both before and after reaching $m_0$. From Theorem~\ref{thm:constRoot}, we know that the barriers at $x_2$ are ordered --- that is, the earliest time at which we stop at $x_2$ before reaching $m_0$, is later than the earliest time we stop at $x_2$ after reaching $m_0$. Since the embedding constraint has two degrees of freedom (there are four atoms of mass, but two values are fixed by the requirement that the probabilities sum to one, and the requirement that the embedded mass has mean equal to $s_0$), this means that we can compute the optimal barrier by optimising over the single remaining degree of freedom.

We implement a simple numerical algorithm, inspired by the PDE characterisation of \cite{CoxWang:11}, which finds the potential of the stopped process. By optimising over the potential functions of the measure embedded before and after reaching $m_0$, we are able to compute the critical times at which the barriers must start. Here, the potential $u_\lambda(x)$ associated with a measure on $\R$ is defined to be: $u_\lambda(x) = \int | y-x| \, \lambda(dy)$. The numerical implementation was performed in Python\footnote{A Jupyter notebook containing the code used to produce the figures in this paper can be downloaded from \url{https://github.com/amgc500/SEPInsider}.}. In Figure~\ref{fig:price} we plot the price of the variance option as a function of $m_0$. Moreover, we can see how the law of the quadratic variance in the extremal model varies as we change $m_0$: this is shown in Figure~\ref{fig:cdf} for several values of $m_0$, as well as the values of the barrier at $x_1, x_2$, before and after hitting $m_0$.

\begin{figure}[ht]
  \centering
  \includegraphics[width=0.9\textwidth]{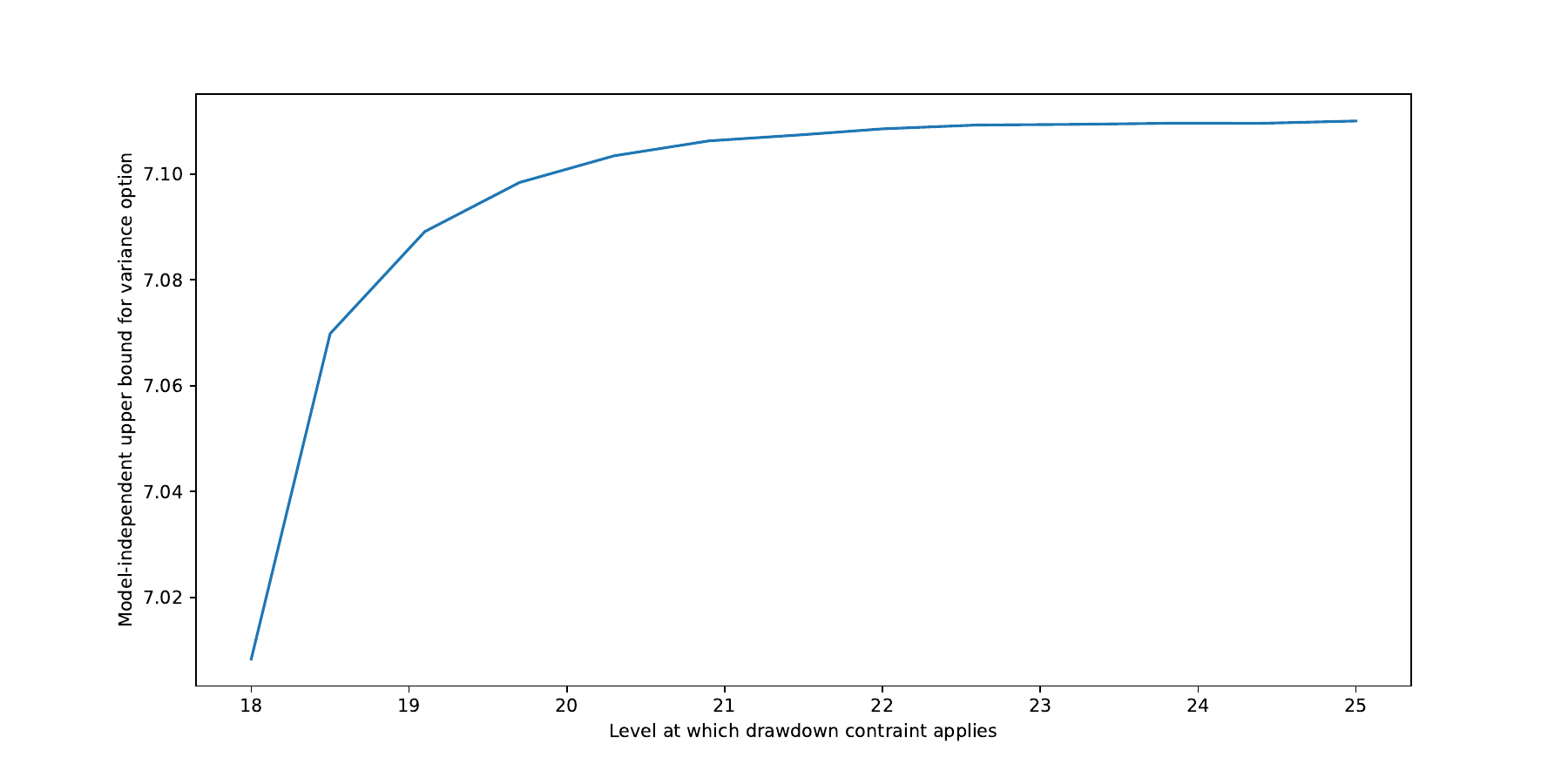}
  \caption{Effect of changing $m_0$ on barrier. For values below about 17.9, there is no feasible model, and therefore there is an arbitrage if the option can be sold at any finite price.  Since the upper range of the price distribution, $x_3 = 25$, $m_0 = 25$ corresponds to having no additional information, the maximal price in the plot (attained when $m_0 = 25$) is accordingly the model-independent upper bound on the price in the absence of any additional information.}
  \label{fig:price}
\end{figure}

\begin{figure}[ht]
  \centering
  \includegraphics[width=0.9\textwidth]{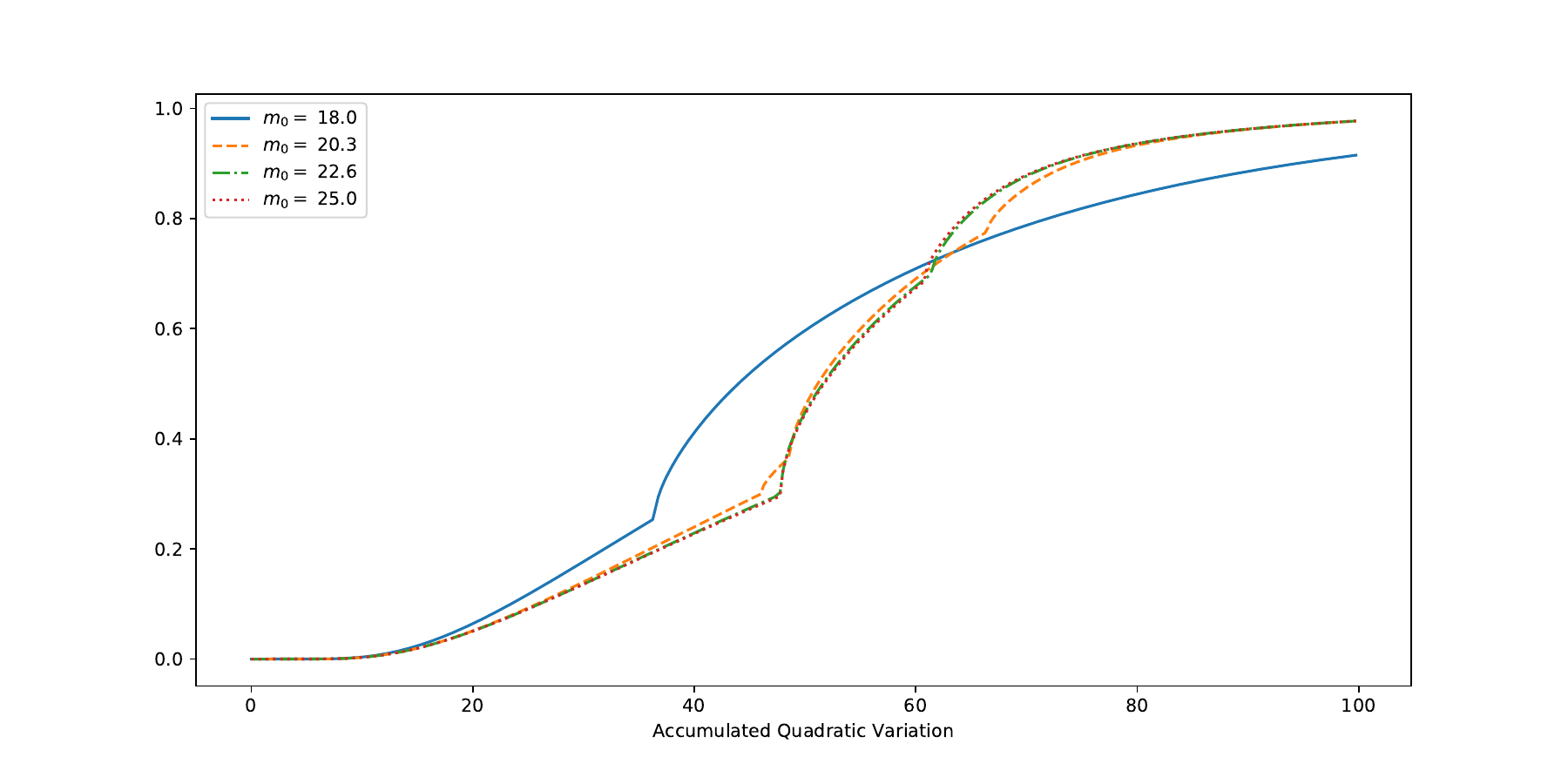}
  \caption{The CDF of the realised quadratic variation of $S$ in the extremal model, for 4 different values of $m_0$.}
  \label{fig:cdf}
\end{figure}

\begin{figure}[ht]
  \centering
  \includegraphics[width=0.9\textwidth]{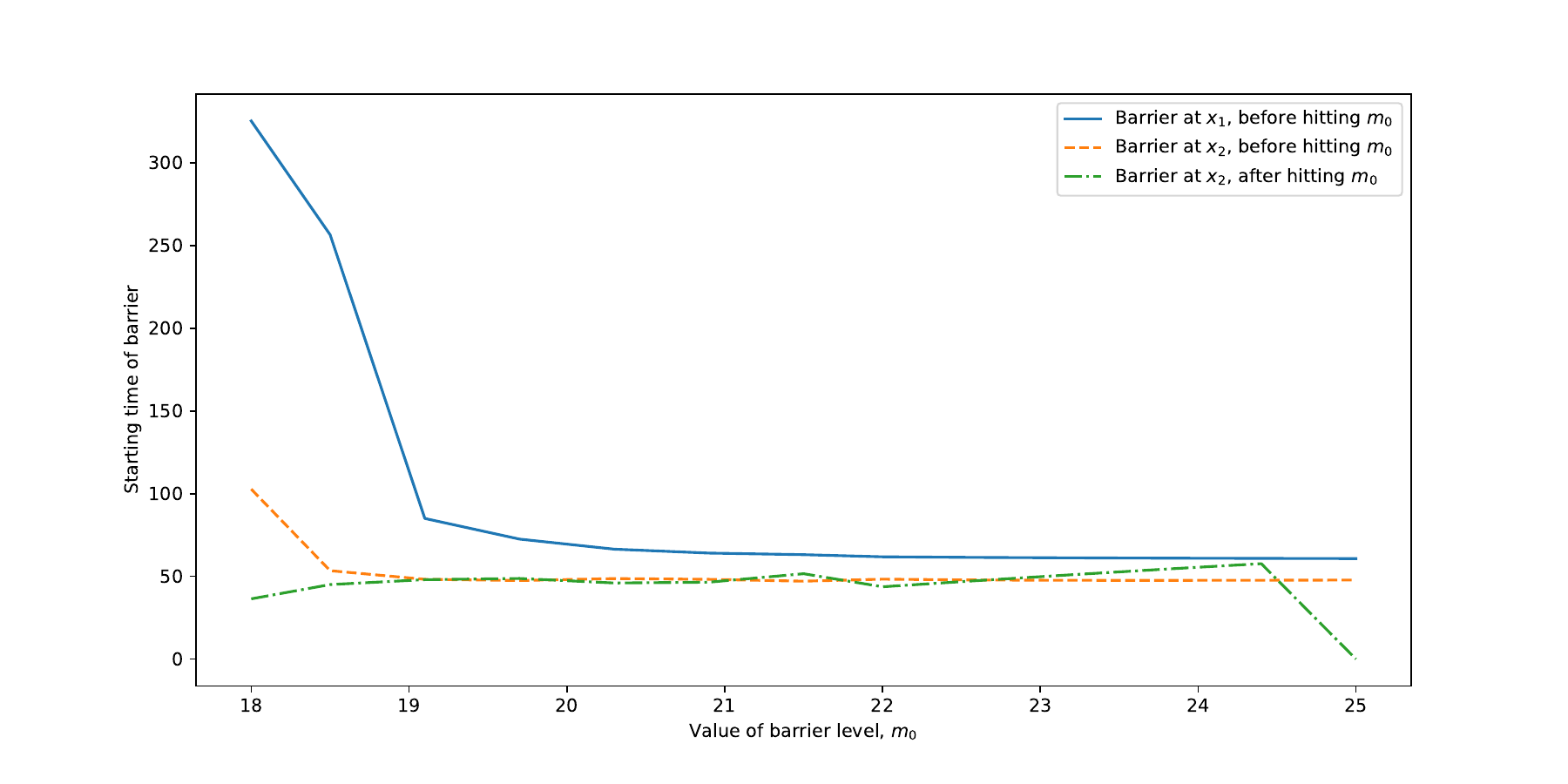}
  \caption{The values determining the barrier for different values of $m_0$. In the figure, the values of the barriers at $x_1, x_2$ before hitting $m_0$, and $x_2$ after hitting $m_0$ are shown. All other barrier values are either $0$ or $\infty$.}
  \label{fig:barrier}
\end{figure}

Under the restriction to a small number of atomic masses, the optimal models are relatively easy to find numerically in simple examples such as these. However, Theorem~\ref{thm:constRoot} only provides necessary conditions for a given barrier to be optimal. An open, and interesting question, is whether it is possible to provide sufficient conditions, and moreover, whether a numerical scheme to compute the corresponding bounds can be implemented. Doing this appears to us to be a challenging problem, and we leave this as an open question for future work.

\renewbibmacro*{in:}{}

\printbibliography

\end{document}
